\documentclass[12pt]{article}
\usepackage[utf8]{inputenc}
\usepackage{natbib}
\usepackage{eurosym}
\usepackage{amstext,amsthm}
\usepackage{amsmath}
\usepackage{amssymb,mathtools}
\usepackage[nointegrals]{wasysym} 
\usepackage{mathrsfs}
\usepackage{hyperref}
\usepackage{graphicx}
\usepackage[figuresleft]{rotating}

\usepackage{tikz}

\newtheorem{proposition}{Proposition}[section]

\newtheorem{lemma}[proposition]{Lemma}
\newtheorem{theorem}{Theorem}
\theoremstyle{definition}
\newtheorem{remark}[proposition]{Remark}
\newtheorem{assumption}{Assumption}
\newtheorem{definition}[proposition]{Definition}

\pagestyle{headings}

\setlength{\bibsep}{0pt}

\setlength{\textheight}{230mm}
\setlength{\topmargin}{-15mm} 
\setlength{\textwidth}{17cm} 
\setlength{\oddsidemargin}{-3mm}
\setlength{\columnseprule}{.1pt}
\setlength{\columnsep}{20pt}

\newcommand{\diag}{\text{diag}}

\addtolength{\bibsep}{1.5mm}


\usepackage[normalem]{ulem}
\usepackage{color}

\usepackage{times}

\makeatletter
\renewcommand{\@fnsymbol}[1]{\ensuremath{%
    \ifcase#1\or 1\or 2\or 3\or
    \mathsection\or \mathparagraph\or \|\or 1\or
    2\or 3 \else\@ctrerr\fi}}
\makeatother
\begin{document}

\setcounter{page}{1} \title{\LARGE \color{black}A central limit theorem
  concerning uncertainty in estimates of individual admixture}
\author{\sc Peter Pfaffelhuber\thanks{Abteilung f\"ur Mathematische
    Stochastik, Albert-Ludwigs University of Freiburg,
    Ernst-Zermelo. 1, D - 79104 Freiburg, Germany, e-mail:
    p.p@stochastik.uni-freiburg.de}, \sc Angelika
  Rohde\thanks{Abteilung f\"ur Mathematische Stochastik,
    Albert-Ludwigs University of Freiburg, Ernst-Zermelo. 1, D - 79104
    Freiburg, Germany, e-mail:
    angelika.rohde@stochastik.uni-freiburg.de}}

\date{\today}

\maketitle

\begin{abstract}
  \noindent
  The concept of individual admixture (IA) assumes that the genome of individuals is composed of alleles inherited from $K$ ancestral populations. Each copy of each allele has the same chance $q_k$ to  originate from population $k$, and together with the allele frequencies $p$ in all populations at all $M$ markers, comprises the admixture model. Here, we assume a supervised scheme, i.e.\ allele frequencies $p$ are given through a reference database of size $N$, and $q$ is estimated via maximum likelihood for a single sample. We {\color{black} study laws of large numbers and central limit theorems describing }effects of finiteness of both, $M$ and $N$, on the estimate of $q$. We recall results for the effect of finite $M$, and provide a central limit theorem for the effect of finite $N$, {\color{black} introduce a new way to express the uncertainty in estimates in standard barplots}, give simulation results, and discuss applications in forensic genetics.
\end{abstract}

\noindent {\bf Keywords:} Admixture model; Central Limit
Theorem; Biogeographical Ancestry

\section{Introduction}
When dealing with genetic data of a human sample, the  ancestry of the sample donor is of great interest in fields such as population history (see e.g.\ \citealp{Rosenberg2002}), genome wide association studies, admixture mapping (e.g.\ \citealp{pmid15088268}), and forensic genetics (see e.g.\ \cite{Kidd2021}
for a recent contribution). Using a reference database of allelic frequencies in various ancestral populations, we distinguish between all-or-nothing classifiers, and admixture models. In the former, the sample is classified into one (and only one) of the reference populations, in the latter, the task is to estimate the individual admixture (IA) proportions of all populations. 

We are going to analyse admixture models. Today, they are implemented in software such as {\sc STRUCTURE} \citep{Pritchard2000}, or its successor {\sc ADMIXTURE} \citep{Alexander2009}, or various others (reviewed in \citealp{pmid25937887}). The underlying probabilistic model comes with $K$ ancestral populations, where each population comes with its own allele frequencies at $M$ ancestry informative markers (AIMs, most frequently SNPs), \sloppy collected in some array $p = (p_{kim})_{k=1,...,K, i=1,...,I, m=1,...,M}$, where $p_{kim}$ is the frequency of allele $i$ in population $k$ at locus $m$. In this model, it is assumed that every individual with diploid genotype $x = (x_{im})_{i=1,...,I, m=1,...,M}$ (storing the number -- 0, 1 or 2 -- of copies for each of the $I$ alleles at all $M$ loci) is a mixture of the ancestral populations, i.e.\ each allele originates in one of the $K$ ancestral populations, modeled by the individual admixture $q$. The software -- using a Bayes approach -- either simultaneously estimates $p$ and $q$ (unsupervised setting) or uses ancestry information of a large part of the data and estimates $q$ for the remaining samples (supervised setting). In the sequel, we will only study the (analytically simpler) supervised setting. More precisely, we will assume that allele frequencies $p$ are given from an external source of non-admixed samples (i.e.\ the reference database), and we maximize the likelihood of individual admixture of another (putatively admixed) individual (test sample) with respect to~$q$.

Ideally, the $M$ AIMs used to infer IA have large frequency differentials between populations \citep{Rosenberg2005}. However, for real populations, such markers are not always available, leading to a variance in the estimation procedure of IA (see e.g.\ \citealp{PfaffLong2004}). However, other sources of uncertainty must not be neglected: (i) Note that $p$ is only estimated from a finite reference database of samples, e.g.\ from public sequencing projects. The resulting variance of $p$ feeds back into the estimation of $q$. (ii) (Some ancestry proportion of) the test sample may come from a population not represented in the reference database. Still, a ML estimator is found, but it is biased by the restricted reference database. 

Let us give a brief introduction to the literature dealing with the admixture model: While early papers such as \cite{Elston1971}, which is based on a likelihood framework by  \cite{Krieger1965}, have phenotypic data from a hybrid population in mind, the interest in the same model increased at the time when genetic data was available; see \cite{Chakraborty1986} for an overview of the model. While various studies assume population data from $K=2$ ancestral and one hybrid (admixed) population, \cite{Hanis1986} treats the same case we are interested in, where the individual admixture of a single admixed individual is to be estimated. Apparently, the reliability of the estimates is of fundamental importance. As \cite{DiversAllison2011} put it, measurement errors, leading to uncertainty in the estimate $\hat q$, occur since (1) the set of genetic markers is finite and are mostly not perfectly ancestral informative, (2) ancestral population allele frequencies are not known precisely, and (3) the number of ancestral populations that contributed genetically to the test sample is not always well known. As for (1), several approaches have led to algorithms for selecting good markers, mostly based on differentials in allele frequencies or derived quantities (see e.g.\ \citealp{Rosenberg2003, Rosenberg2005}). The variance due to (1) was studied previously to some extent; see e.g.\ \cite{Millar1991, PfaffLong2004}. Other studies \citep{pmid17507670, pmid19339787, DiversAllison2011}  are interested in the effect of the uncertaintly in $\hat q$ on downstream procedures (e.g.\ association tests). Our contribution here is a more analytic approach to the admixture model: We re-derive and re-formulate variance estimates from \cite{PfaffLong2004} due to~(1) in Remark~\ref{rem:Minfty}, give the first theoretical treatment of the variance induced by (2) -- see Theorem~\ref{T1}, and use simulations in order to study the bias due to (3). Formally, the derivation of Theorem~\ref{T1} (and the calculations in Remark~\ref{rem:Minfty}) are an application of the theory of $M$-estimation; see e.g.\ \cite{vanderGeer2009}. However, we keep all our calculations self-contained and only use standard results from calculus. 
On our way, we state precise Assumptions~\ref{ass0} and~\ref{ass1}, in order to guarantee existence and uniqueness of the maximum-likelihood (ML) estimator. Another interesting result is a novel iterative scheme for efficient computation of the ML-estimator (Theorem~\ref{T2}). Although this scheme is not the fastest approach for estimation of IA (as compared e.g.\ to the block-relaxation method of \citealp{Alexander2009}), it is straight-forward to implement in just a few lines of code. We are going to use this scheme for simulations and examples from a forensic database in Section~\ref{S3}. {\color{black} For the reader interested in applications of our results, we establish in Section~\ref{ss:31} a way to display the estimated variance in the estimation of IA in standard barplots. Apart from simulations,} we consider two marker sets used in forensic genetics, with a reference database available on the homepage of the software {\sc Snipper} \citep{Snipper2007}; see also Section~\ref{S33} for more details. We start by introducing some notation which is required in the sequel.

\begin{remark}[Notation]
  We denote by $x \in \mathbb R^n$ (for some $n=2,3,...$) a column   vector, and by $x^\top$ the corresponding row vector, i.e.\ $\top$   denotes a transposition. The standard scalar product of   $p,q\in\mathbb R^n$ is   $\langle q,p\rangle = \langle p,q\rangle := p^\top \cdot q$. We use   $E_n$ for the identity matrix with $n$ rows and columns and $\diag(x)$ the diagonal matrix with $x_1,...,x_n$ on the diagonal.
  \\
  For $K=2,3,...$, we frequently need $\mathbb S := \mathbb S_K$, the $K-1$-dimensional   simplex, i.e.\ $$\mathbb S := \{q = (q_1,...,q_K) \in \mathbb R_{\geq 0}: q_1 + \cdots + q_K = 1\}.$$ The inner set of $\mathbb S$ is  $\mathbb S^\circ := \{q = (q_1,...,q_k) \in \mathbb S: q_1, ..., q_K >0\}$. For $I=2,3,...$ and $p\in \mathbb S_I$, we recall the multinomial distribution with $n$ trials and success probabilities $p = (p_1,...,p_I)$. Here, $X = (X_1,...,X_I) \sim \text{Multi}(n,p)$ for some $n=1,2,...$ and $p\in\mathbb S_I$, iff   for $x=(x_1,...,x_I)$ with $x_1,...,x_I\geq 0$ and $x_1 + \cdots + x_I = n$ $$ \mathbb P(X = x) = \binom nx p^x, \qquad \binom nx := \frac{n!}{x_1! \cdots x_I!}, \qquad p^x := p_1^{x_1} \cdots p_I^{x_I}.$$ Further, for some smooth $f: \mathbb R^n \to \mathbb R$, let  $\nabla f(x) \in \mathbb R^n$ be the gradient of $f$, evaluated at $x$, with entries $\frac{\partial f(x)}{\partial x_i}, i=1,...,n$, and by $\nabla^2 f(x) \in \mathbb R^{n\times n}$ the (symmetric) Hessian of $f$, with entries  $\frac{\partial^2 f(x)}{\partial x_i \partial x_j}, i,j=1,...,n$. On a locally compact metric space $E$, we denote by $\mathcal M_n(E)$ the set of measures (defined on the  Borel-sigma-field of $E$) with total mass~$n$, and on the set $\mathcal M_1(E)$ of probability measures, we denote weak  convergence by $\Rightarrow$.
\end{remark}

\section{Model and main results}
\sloppy In \cite{Pritchard2000}, \cite{Tang2005}, and elsewhere, the following model was used: Assuming a diploid species, there are $K\geq 2$ populations and $M$ markers in {\color{black}(multi-locus)} linkage equilibrium, and at marker $m$, there are $I_m$ possible alleles, leading to a total of $\binom{I_m+1}{2}$ possible genotypes (e.g.\ for a SNP, we have $I_m=4$ and genotypes $\{AA, AC, AG, AT, CC, CG, CT, GG, GT, TT\}$, where we do not distinguish phase, i.e.\ $AC$ and $CA$ are indistinguishable; think of $1 \equiv A, 2\equiv C, 3\equiv G, 4\equiv T$) and for locus $m$, each {\color{black} allele} $i=1,...,I_m$ in population $k$ has frequency $p_{kim}$ and $\sum_i p_{kim} = 1$, for $k=1,...,K, m=1,...,M$. In order to simplify notation in the sequel, we assume that $I = I_m, m=1,...,M$, i.e.\ each marker has the same number of alleles. Now, the genotype of a single individual is characterized by $x = (x_1,...,x_M)$ with $x_m = (x_{im})_{i=1,...,I} \in \{0,1,2\}^{I}$ with $\sum_i x_{im} = 2$ for $m=1,...,M$. In addition, the genome of the individual is a mixture of the $K$ (non-admixed) classes according to some probability vector $q \in \mathbb S$. This means that each copy of each locus has a{\color{black}n independent} chance $q_k$ to originate from population $k$ for $k=1,...,K$. We will call $q$ the {\em individual admixture (IA)} of $x$. This leads to the following definition of the statistical model for data from a single individual we treat here:

\begin{definition}[Admixture model] \label{def:admixtureModel} \mbox{} Let $p = (p_{\cdot \cdot m})_{m=1,...,M} = (p_{k\cdot m})_{k=1,...,K,    m=1,...,M} \in \mathbb S_{I}^{K\times M}$ (i.e.\ for all $k\in K, m\in M$, we have $p_{k\cdot m} \in \mathbb S_{I}$) and $q \in \mathbb S$ be given. Then, the probability of obtaining the (test) data $X = (X_{\cdot 1}, ..., X_{\cdot M}) \in \{0,1,2\}^{I\times M}$ is given by $X_{\cdot m} \sim \text{Multi}(2, (\langle q, p_{\cdot    im}\rangle)_{i=1,...,I})$ (the multinomial distribution with $2$  trials and probabilities $(\langle q, p_{\cdot im}\rangle)_{i=1,...,I}$), independently for  all $m=1,...,M$. Thus, the log-likelihood (scaled by $2M$) is given by
  \begin{equation}
    \label{eq:L|P}
    \begin{aligned}
      \ell(p, q|x) & := \frac 1{2M}\log \mathbb P_{p,q} (X_1 =
      x_1,...,X_M = x_M) \\ & := \frac 1{2M} \sum_{m=1}^M      \log(\text{Multi}(2, (\langle q, p_{\cdot im}\rangle)_{i=1,...,I})(x_{\cdot m})) \\ & = \frac 1{2M}      \sum_{m=1}^M \Big(\log\binom 2{x_{\cdot m}} +      \sum_{i=1}^I x_{im} \log(\langle q, p_{\cdot im}\rangle)\Big)\\      & = C_x + \int z \log(\langle q, \alpha\rangle)      \mu_{p,x}(d\alpha, dz),
    \end{aligned}
  \end{equation}
  where $C_x$ is a constant only depending on $x$ (but not on $p,q$), and
  \begin{align}\label{eq:emp}
    \mu_{p,x} & = \frac 1{2M}               \sum_{m=1}^M\sum_{i=1}^I \delta_{(p_{\cdot im}, x_{im})} \in \mathcal M_{I/2}([0,1]^{K}                \times \{0,1,2\})
  \end{align}
  is an empirical measure.
\end{definition}

\begin{remark}[Similar models and the bi-allelic case]
  \begin{enumerate}
     \item In \cite{Pritchard2000}, the above model coincides with the {\em model with admixture}. Restricting to the possibilities of $q = e_k, k=1,...,K$ (the unit vektors in $\mathbb R^K$), a special case is the {\em model without admixture}, which we also denote by an all-or-nothing classifier. In both models, markers are assumed to be in {\color{black}(multi-locus) linkage equilibrium within ancestral populations}. Note that this restriction has been relaxed by \cite{Falush2003}, who build {\em admixture linkage disequilibrium} into the model, where markers forms a Markov chain rather than an independent family within each ancestral population. A rather novel extension is the {\em recent-admixture model}, where the two copies of each marker do not segregate independently, but originate from the two parents which come with their own IA; see \cite{pmid34735936}.\\
     \item As for generalizations, we note that the likelihood if the number of alleles $I_m$ depends on the marker $m$ is just a slight modification of \eqref{eq:L|P}. 
     In addition, note that it is also straight-forward to extend any of these models to any level of ploidy (as in the current version of {\sc STRUCTURE}), by replacing 2 by the ploidy level in \eqref{eq:L|P}. 
  \item The special case of bi-allelic SNP-markers arises frequently in applications. For this reason, we will spell out this special case in Remarks~\ref{rem:bi1}, \ref{rem:bi2}, \ref{rem:bi3}, \ref{rem:bi4}. In this case, some simplifications of our results apply. 
  \end{enumerate}
\end{remark}

~

\noindent
The first assumption states that if we observe allele $i$ at locus $m$ in our test  data, there is at least one population $k$ where the same allele has also been observed. {\color{black} Otherwise, the only reasonable conclusion is that the reference populations do not suffice to correctly estimate the individual admixture of the test data. In other word, the assumption is a necessary condition for a fit of model and data.}

\begin{assumption}\label{ass0}
  If $x_{im} > 0$, there exists some
  $k$ with $p_{kim} > 0$ or, equivalently,
  \begin{align*}
    \mu_{p,x}\{(\alpha,z): z>0, \alpha = 0\} = 0.  
  \end{align*}
\end{assumption}

\begin{remark}
  {\color{black} Note that Assumption~\ref{ass0} prevents the likelihood-function from vanishing. Precisely, if} 
  Assumption~\ref{ass0} holds, and if $q\in\mathbb S^\circ$, we have that $z \log(\langle q, \alpha\rangle) > -\infty$ for  $\mu_{p,x}$-almost all $(\alpha,z)$. In particular,  $\ell(p,q|x) > -\infty$ for all $q\in\mathbb S^\circ$, and extrema  of $q\mapsto \ell(p,q|x)$ can be found by differentiation.  
\end{remark}

\subsection{Estimating $q$ via Maximum Likelihood}
In this section, our goal is to compute the
Maximum-Likelihood-estimator (or ML-estimator) for $q$, i.e.\ we need
to find the maximizer $\hat q$ of $q\mapsto \ell(p,q|x)$ (for given
$p,x$). We start by computing the first two derivatives, i.e.\ the
gradient and Hessian of $q\mapsto \ell(p,q|x)$. We obtain
from \eqref{eq:L|P}
\begin{align}
  \label{eq:nablaell}
  \nabla \ell(p,q|x) & = \int z\frac{\alpha}{\langle q,\alpha\rangle}
                       \mu_{p,x}(d\alpha,dz),
  \\ \label{eq:nabla2ell}
  \nabla^2 \ell(p,q|x) & = -\int z \frac{\alpha \cdot \alpha^\top}{\langle q,\alpha\rangle^2}
                         \mu_{p,x}(d\alpha,dz) =: -\Sigma_x^{-1}.   
\end{align}

\noindent
Computing the ML-estimator of $q$ is actually an exercise of applying
the theory of Lagrange multipliers (see e.g.\ supplement of
\citealp{Trench2013} for an online resource), used for maximization
under equality constraints; see Lemma~\ref{l:1} below. In addition,
note that the right-hand-side of \eqref{eq:nabla2ell} implies that
$q\mapsto \ell(p,q|x)$ is concave, which helps to give sufficient
conditions for a maximum. For uniqueness of the maximum,
{\color{black}we assume that no reference population is admixed from
  other reference populations; see also Section~2.4
  of~\cite{pmid33034736}:}

\begin{assumption}\label{ass1}
  Let $s\in\mathbb R^K$ with $\langle s,1\rangle = 0$. Then, there is
  $i$ and $m$ such that $\langle s, p_{\cdot im}\rangle \neq 0$. In
  other words,
  \begin{align*}
    \text{$\mu_{p,x}\{(\alpha,z): \langle s, \alpha\rangle \neq 0\}>0$ for all $s\in\mathbb R^K$
    with $\langle s,1\rangle = 0$,}
  \end{align*}
  
\end{assumption}

\begin{remark}
  In order to explain Assumption~\ref{ass1} better, let us make an example where it does not hold: If $K=3$ and $a\in [0,1]$ such that $p_{3im} = a p_{1im} + (1-a)p_{2im}$ for all $i,m$, we see that population $3$ is a mixture of population~1 (contribution $a$) and population~2 (contribution $1-a$). It is no surprise that the ML-estimator in this example is not unique, since we e.g.\ cannot  distinguish between a mixture of populations~1 and~2 with $q = (a,1-a,0)$ and a sample from population~3, i.e.\ $q = (0,0,1)$. In the context of the above assumption, consider $s =  (a,1-a,-1)$. Then, $\langle s,1\rangle = 0$, and $\langle s, p_{\cdot im}\rangle = 0$ for all $i,m$, i.e.\ the assumption does not hold. \\
  More generally, assume that there exists $s\in\mathbb R^K$ with $\langle s,1\rangle = 0$ and $\langle s,\alpha\rangle = 0$ for $\mu_{p,x}$-almost-all $\alpha$. In this case, for $q\in\mathbb S^\circ$ and $h$ small enough for $q + hs \in \mathbb S^\circ$, we find $\langle q,\alpha\rangle = \langle q + hs,\alpha\rangle$ for
  $\mu_{p,x}$-almost-all $\alpha$, implying
  $\ell(p,q|x) = \ell(p,q + hs|x)$ and the likelihood-curve is flat around $\ell(p,q|x)$ in direction of $s$. So, if Assumption~\ref{ass1} does not hold, we cannot hope for uniqueness of the ML-estimator.
\end{remark}

\noindent
{\color{black}The next lemma assures existence and uniqueness of the Maximum Likelihood estimator. If this estimator $\hat q$ has $\hat q_k>0$ for all $k=1,...,K$, it can be found be differentiation.}

\begin{lemma}[ML-estimator]\label{l:1}
  Let Assumption~\ref{ass0} hold.  Let  $x = (x_{im})_{i=1,...,I, m=1,...,M} \in \{0,1,2\}^{I \times M}$ and $p\in \mathbb S_{I}^{K\times M}$. The function $\ell: \mathbb S \to \mathbb R, q\mapsto \ell(p,q|x)$ is concave. If Assumption~\ref{ass1} holds, it is even strictly concave. In this case, $\ell(p,.|x)$ has a unique maximum, and equals $\hat q$, the ML-estimator for $q$. Moreover, under Assumption~\ref{ass1}, $q^\ast\in\mathbb S^\circ$ maximizes $\ell(p,.|x)$ if and only if $\nabla\ell(p,q^\ast|x) = 1$.
\end{lemma}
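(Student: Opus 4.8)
The plan is to treat the four assertions — concavity, strict concavity under Assumption~\ref{ass1}, existence and uniqueness of the maximiser, and the first-order characterisation — in turn, working throughout from the integral representation $\ell(p,q|x) = C_x + \int z\log(\langle q,\alpha\rangle)\,\mu_{p,x}(d\alpha,dz)$ in \eqref{eq:L|P} rather than manipulating the Hessian on the interior.

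For concavity I would note that, for each fixed $(\alpha,z)$, the map $q\mapsto\langle q,\alpha\rangle$ is affine and $t\mapsto z\log t$ is concave on $(0,\infty)$ (since $z\geq 0$), so $q\mapsto z\log(\langle q,\alpha\rangle)$ is concave, with the convention $\log 0=-\infty$ causing no trouble because a zero on the left of the concavity inequality forces, by nonnegativity of $\langle q_0,\alpha\rangle,\langle q_1,\alpha\rangle$, a zero on the right. Averaging over $\mu_{p,x}$ preserves concavity and yields the first claim. (Equivalently, \eqref{eq:nabla2ell} writes $\nabla^2\ell$ as minus an integral of the positive-semidefinite matrices $z\,\alpha\alpha^\top/\langle q,\alpha\rangle^2$, hence negative semidefinite.)

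For strict concavity I would fix $q_0\neq q_1$ in $\mathbb S$ and set $s=q_1-q_0$, so $\langle s,1\rangle=0$, and exploit that $t\mapsto\log t$ is \emph{strictly} concave: along the segment the integrand is strictly concave at every atom $(\alpha,z)$ with $z>0$ and $\langle s,\alpha\rangle\neq 0$, so it suffices to produce one such atom. Assumption~\ref{ass1} supplies an index with $\langle s,p_{\cdot im}\rangle\neq 0$, and the averaged inequality is then strict. The delicate point — the step I expect to need the most care — is the bookkeeping of the weight $z$: atoms with $z=0$ contribute nothing and cannot break a tie, so I would have to ensure the index furnished by Assumption~\ref{ass1} also carries $z=x_{im}>0$, drawing on Assumption~\ref{ass0} together with the structural identity $\sum_i x_{im}=2$.

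Existence and uniqueness then follow from soft arguments. The domain $\mathbb S$ is compact and $\ell(p,\cdot|x)$ is upper semicontinuous (its only singularities are downward, to $-\infty$, on the boundary) and bounded above by $C_x$, so a maximiser exists; by the remark following Assumption~\ref{ass0} the value is finite on the nonempty set $\mathbb S^\circ$, so the maximum is finite. Strict concavity precludes two distinct maximisers, since their midpoint would strictly exceed both, giving uniqueness and identifying $\hat q$. For the first-order condition at an interior $q^\ast\in\mathbb S^\circ$, concavity makes ``maximiser'' equivalent to ``critical point of $\ell$ on the hyperplane $\langle q,1\rangle=1$'', i.e.\ $\langle\nabla\ell(p,q^\ast|x),s\rangle=0$ for all $s$ with $\langle s,1\rangle=0$, equivalently $\nabla\ell(p,q^\ast|x)=\lambda\cdot 1$ for some multiplier $\lambda$. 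To pin down $\lambda$ I would take the inner product with $q^\ast$ in \eqref{eq:nablaell}: the integrand collapses to $z$ (the factor $\langle q^\ast,\alpha\rangle$ cancels, finite since $q^\ast\in\mathbb S^\circ$ forces $\langle q^\ast,\alpha\rangle>0$ wherever $z>0$ by Assumption~\ref{ass0}), so $\lambda=\langle q^\ast,\nabla\ell\rangle=\int z\,\mu_{p,x}(d\alpha,dz)=\frac{1}{2M}\sum_{m,i}x_{im}=1$. Hence $\nabla\ell(p,q^\ast|x)=1$, and concavity delivers the converse direction immediately.
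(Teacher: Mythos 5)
Your overall route matches the paper's: concavity from the negative semidefiniteness of $\nabla^2\ell$ (or, as you prefer, from composing the affine map $q\mapsto\langle q,\alpha\rangle$ with the concave $z\log(\cdot)$ and averaging), existence and uniqueness from compactness of $\mathbb S$ plus strict concavity, and the first-order characterisation via a Lagrange multiplier that is pinned down to $\lambda=1$ by pairing $\nabla\ell(p,q^\ast|x)$ with $q^\ast$ and using $\frac1{2M}\sum_{m,i}x_{im}=1$ — this last computation is identical to the paper's. Your direct convexity argument for concavity is slightly more robust than the paper's Hessian argument, since it covers all of $\mathbb S$ rather than only points where differentiation makes sense, and your global chord argument for strict concavity replaces the paper's local second-order Taylor expansion at interior points; these are cosmetic differences.

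The one substantive issue is precisely the ``delicate point'' you flag and leave open: to get a strict inequality you need an atom of $\mu_{p,x}$ with \emph{both} $z>0$ and $\langle s,\alpha\rangle\neq0$, whereas Assumption~\ref{ass1} only guarantees $\mu_{p,x}\{(\alpha,z):\langle s,\alpha\rangle\neq0\}>0$ without reference to $z$. Your hope of closing this from Assumption~\ref{ass0} and $\sum_i x_{im}=2$ does not work: take $K=2$, $M=1$, $I=3$, $p_{\cdot11}=(1/2,1/2)$, $p_{\cdot21}=(1/4,1/8)$, $p_{\cdot31}=(1/4,3/8)$ and $x=(2,0,0)$. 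Both assumptions hold (for $s=(1,-1)$ one has $\langle s,p_{\cdot21}\rangle\neq0$), yet the only atom with $z>0$ sits at $\alpha=(1/2,1/2)$, so $\ell(p,\cdot|x)$ is constant and not strictly concave. You should be aware, however, that the paper's own proof makes exactly the same silent leap: its displayed strict inequality requires $\int z\,\langle s,\alpha\rangle^2/\langle q,\alpha\rangle^2\,\mu_{p,x}(d\alpha,dz)>0$, which Assumption~\ref{ass1} as literally stated does not deliver. The honest fix is to read (or restate) Assumption~\ref{ass1} for the $z$-weighted measure, i.e.\ to require $\int z\,\mathbf 1_{\{\langle s,\alpha\rangle\neq0\}}\,\mu_{p,x}(d\alpha,dz)>0$ for all $s\neq0$ with $\langle s,1\rangle=0$; with that reading both your argument and the paper's go through verbatim. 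So your proposal is not wrong in strategy — it correctly isolates the only non-trivial hypothesis — but the gap you identify cannot be closed from the stated assumptions alone and should be resolved by strengthening the assumption rather than by further bookkeeping.
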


\begin{proof}
  From \eqref{eq:nabla2ell}, we see that $\nabla^2\ell(p,q|x)$ is
  non-positive definite for all $q$, which already implies concavity
  of $\ell$. For strict concavity, let $q\in\mathbb S^\circ$ and
  $s \in \mathbb R^K$ be such that $q + hs \in \mathbb S^\circ$ for
  small $h$, which implies that $\langle s,1\rangle = 0$. Then, if
  Assumption~\ref{ass1} holds,
  \begin{align*}
    \ell(p,q + hs|x) & = \ell(p,q|x) + h\nabla \ell(p,q|x) \cdot s + \tfrac 12 h^2
                       \nabla^2 \ell(p,q|x) + o(h^2)
    \\ & = \ell(p,q|x) + h\nabla \ell(p,q) \cdot s - \tfrac 12 h^2
         \int z\frac{\langle s,\alpha\rangle^2}{\langle q,\alpha\rangle^2} \mu_{p,x}(d\alpha, dz) + o(h^2)
    \\ & < \ell(p,q|x) + h\nabla \ell(p,q) \cdot s + o(h^2),
  \end{align*}
  implying strict concavity of $\ell(p,.|x)$. For such a function, defined on a compact set, it is well-known that a unique (local and  global) maximum exists; see e.g.\ \cite{Trench2013}, Section~5.
  \\
  Now, let $q^\ast\in\mathbb S^\circ$ maximize $q\mapsto \ell(p,q|x)$.
  We use the theory of Lagrange multipliers and impose the restriction
  $\langle q,1\rangle = 1$. Then, $q^\ast$ solves, for some
  $\lambda \in \mathbb R$, using \eqref{eq:nablaell},
  \begin{align*}
    & \nabla \ell(p,q^\ast|x) = \int z\frac{\alpha}{\langle q^\ast, \alpha\rangle}
      \mu_{p,x}(d\alpha,dz)= \lambda,
      \qquad \qquad  \langle q^\ast, 1\rangle = 1.
  \end{align*}
  From this, we can eliminate $\lambda$, since
  \begin{align*}
    \lambda = \langle q^\ast, \lambda\rangle = \int z\frac{\langle q^\ast, \alpha\rangle}{\langle q^\ast, \alpha\rangle}
    \mu_{p,x}(d\alpha, dz) = \frac{1}{2M} \sum_{m=1}^M \sum_{i=1}^I x_{im} = 1
  \end{align*}
  i.e.\ $q^\ast$ solves $\nabla \ell(p,q^\ast|x) = 1$. Next, if
  $q^\ast\in \mathbb S^\circ$ satisfies $\nabla\ell(p,q^\ast|x) = 1$
  and if Assumption~\ref{ass1} holds, then, for $s\in\mathbb R^K$ with
  $\langle s,1\rangle =0$ and $h$ small,
  \begin{align*}
    \ell(p,q^\ast + hs|x) & = \ell(p,q^\ast|x) + h \langle 1, s\rangle + \tfrac 12 h^2 \int
                       z \frac{\langle s,\alpha\rangle^2}{\langle q^\ast, \alpha\rangle^2}
                       \mu_{p,x}(d\alpha, dz) + o(h^2)
                       < \ell(p,q^\ast|x),
  \end{align*}
  i.e.\ $q^\ast$ is a local maximum of $\ell(p,.|x)$. This finishes the
  proof.
\end{proof}

\begin{remark}[The bi-allelic case~1]\label{rem:bi1}
  Frequently, SNP-markers are bi-allelic, and we briefly translate our setting and Lemma~\ref{l:1} to this scenario. We will use $p_{km}$ and $1-p_{km}$ for the allele frequencies of both alleles (called {\color{black}alleles $A$ and $B$} in the sequel) in population $k$ at marker $m$, and $x_m$ for the number of occurrences of the $A$ allele. (This is a slight abuse of notation.) We note that our results do not depend on switching $A$ and $B$ allele. In the bi-allelic case, the empirical measure $\mu_{p,x}$ from \eqref{eq:emp} reads 
  \begin{align}
    \mu_{p,x} & = \frac{1}{M} \sum_{m=1}^M \delta_{(p_{\cdot m}, x_m)} \in \mathcal M_1([0,1]^K \times \{0,1,2\}).\label{eq:bimeas1}
  \end{align}
  (Note that the $B$ allele does not appear on the right hand side.) Using this measure, we write for the log-likelihood from \eqref{eq:L|P} 
  \begin{equation}
  \label{eq:ellbi}
    \begin{aligned}
    2\ell(p,q|x) & = C_x + \int (z \log(\langle q, \alpha\rangle) + (2-z) \log(\langle q, 1-\alpha\rangle)) \mu_{p,x}(d\alpha, dz).
  \end{aligned}
  \end{equation}
  From Lemma~\ref{l:1}, we then see -- provided Assumption~\ref{ass0}, which is here
  \begin{align*}
  \mu_{p,x}\{(\alpha,z): z>0, \alpha = 0 \text{ or } z<2, \alpha=1\} = 0,
  \end{align*}
  and~Assumption~\ref{ass1} hold -- that $q^\ast \in \mathbb S^\circ$ is the unique global maximum of $\ell(p,.|x)$ if and only if 
  \begin{align*}
      \nabla^2 \ell(p,q|x) = \tfrac 12 \int \Big(z \frac{\alpha}{\langle q, \alpha\rangle} + (2-z)\frac{1-\alpha}{\langle q, 1-\alpha\rangle}\Big) \mu_{p,x}(d\alpha, dz) = 1.
  \end{align*}
\end{remark}

\subsection{A central limit theorem for the ML-estimator}
\label{S23}
As \cite{DiversAllison2011} put it, the measurement error in IA estimates $q$ can have several reasons: 
\begin{enumerate}
    \item[(1)]  The set of AIMs has imperfectly known ancestral population allele frequencies. 
    \item[(2)] The historical knowledge about ancestral populations is imperfect, e.g.\ the the number of ancestral populations that intermated to create the admixed individual is not always well known. \item[(3)] AIMs are not perfectly ancestry informative for all population distinctions.
\end{enumerate}
Theorem~\ref{T1} below gives a result on asymptotic normality of the ML-estimator when taking into account the effects of~(1), so it also leads to an estimate of the variance of $\hat q$. Moreover, we note that \citep{PfaffLong2004} have studied the variance of $\hat q$ due to~(3) -- not perfectly informative markers -- and we recall their ideas in Remark~\ref{rem:Minfty}. In both cases, we apply an approach involving a central limit theorem already discussed in \cite{Tang2005}, and which is based on the theory of asymptotic normality of ML-estimators; see e.g. Theorem 6.3.10 in \cite{Lehmann}. While \cite{Tang2005} is dealing with the task of simultaneously estimating $p$ and $q$, they conclude that the theory of asymptotic normality of the ML-estimator is hardly applicable since it requires inversion of a large matrix. However, when estimating $q$ separately, this inversion only involves a $K\times K$-matrix (see $\Sigma_x$ in Theorem~\ref{T1}) which is easily done since $K\leq 7$ in most applications. Consequently, in both cases, (1) and (3), we reach some asymptotic results which can be used efficiently in computation. We will  discuss measurement error due to (1)--(3) in the context of our simulations in Section~\ref{S32}. 

\noindent
{\color{black}We now re-derive -- using our notation -- results from \citep{PfaffLong2004}, collected in \eqref{eq:6} and \eqref{eq:Vqell2}, concerning variance of $\hat q$ in the case of many loci, i.e.\ large $M$.}

\begin{remark}[Extending the results from \citep{PfaffLong2004}: The case of large $M$]\label{rem:Minfty}
We will stick to our notation with empirical measures and $q$ with $\langle q, 1\rangle = 1$, and recall and extend results by \citealp{PfaffLong2004}. 
Assuming a large number $M$ of loci, we will now show that the covariance matrix of $\hat q$ is -- if we consider (3) from above as the only source of uncertainty -- approximately given by
\begin{align}\label{eq:6}
    \mathbb V_q[\hat q - q] & \approx \frac{1}{2M} (\Sigma - q\cdot q^\top), \qquad \Sigma^{-1} := \int \frac{\alpha \cdot \alpha^\top}{\langle q, \alpha\rangle} \mu_p(d\alpha)
\end{align}
with the empirical measure
\begin{align*}
  \mu_{p}(d\alpha) = \mu_{p,x}(d\alpha, \{0,1,2\}) = \frac 1{M}
  \sum_{m=1}^M\sum_{i=1}^I \delta_{p_{\cdot im}}(d\alpha) \in \mathcal M_{I}([0,1]^{K}).
\end{align*}
Note that $\Sigma^{-1} \cdot q = 1$ (since $\sum_i p_{kim} = 1$), and therefore $(\Sigma - q\cdot q^\top) \cdot 1 = q-q = 0$, i.e.\ $1$ is eigenvector of $\mathbb V_q[\hat q - q]$ for the eigenvalue~0. Moreover, since $\mathbb V_q[\hat q - q]$ is symmetric, this implies that the eigenvectors are orthogonal, and assuming that $\hat q - q$ is normal 
\begin{align}\label{eq:Vqell2}
\mathbb V_q[||\hat q - q||_2^2] \text{ is approximately the sum of all eigenvectors of }  \frac{1}{2M}(\Sigma - q\cdot q^\top).
\end{align}
\\
The claim \eqref{eq:6} is analogous to the statement of \citep{PfaffLong2004} that the information of a set of AIMs is proportional to the inverse of the Fisher information, $\mathbb E_q[\nabla^2\ell(p,q|X)]$. (However, note that \citealp{PfaffLong2004} use a different parametrization.) In order to see \eqref{eq:6}, we use the Taylor series expansion (and Lemma~\ref{l:1} for the first equality) in order to obtain
\begin{align} \label{eq:taylor0}
  1 & = \nabla \ell(p, \hat q|X) \approx \nabla \ell(p, q|X) + \nabla^2 \ell(p, q|X)(\hat q - q).
\end{align}
With the law of large numbers and $\mathbb E_q[X_{im}] = 2\langle q, p_{\cdot im}\rangle$, we obtain from \eqref{eq:nabla2ell}
\begin{align}\label{eq:Sigmax-Sigma}
  \nabla^2 \ell(p, q|X) & = -\Sigma_x^{-1} \approx - \int \frac{\alpha \cdot \alpha^\top}{\langle q, \alpha\rangle} \mu_p(d\alpha) = -\Sigma^{-1},
\end{align}
which is symmetric and strictly negative definite (hence invertible). Moreover, 
$$\mathbb E_q[\nabla \ell(p, q|X)] = \frac{1}{M} \sum_{m=1}^M \sum_{i=1}^I \langle q, p_{\cdot im}\rangle \frac{p_{\cdot im}}{\langle q, p_{\cdot im}\rangle} = 1$$ and, by using $\mathbb{COV}_q[X_{im}, X_{jm}] = 2\langle q, p_{\cdot  im}\rangle(\delta_{ij} - \langle q, p_{\cdot j m}\rangle)$ and independence of
markers, the matrix
\begin{align*}
  \mathbb{V}_q[\nabla \ell(p, q|X)] & = \frac{1}{4M^2} \sum_{m=1}^M \sum_{i,j=1}^I        \Big(\frac{p_{\cdot im}\cdot p_{\cdot jm}^\top }{\langle q, p_{\cdot im}\rangle\langle q, p_{\cdot jm}\rangle} \Big)2\langle q, p_{\cdot im} \rangle(\delta_{ij} - \langle q, p_{\cdot j m}\rangle)
  \\ & = \frac{1}{2M^2} \Big(\sum_{m=1}^M \Big(\sum_{i=1}^I \frac{p_{\cdot im} \cdot p_{\cdot im}^\top}{\langle q, p_{\cdot im}\rangle}\Big) - 1 \cdot 1^\top \Big) \\ & = \frac 1{2M}\Big( \int \frac{\alpha \cdot \alpha^\top}{\langle q, \alpha\rangle}  \mu_p(d\alpha) - 1\cdot 1^\top\Big) = \frac{1}{2M}(\Sigma^{-1} - 1\cdot 1^\top). 
\end{align*}
Finally, we obtain from \eqref{eq:taylor0} that, approximately, (using that $\mathbb V[A\cdot X] = A \cdot \mathbb V[X] \cdot A^\top$ for some random vector $X$ and a matrix $A$)
\begin{align*}
    \mathbb V_q[\hat q - q] & = \frac 1{2M} \Sigma \cdot (\Sigma^{-1} - 1\cdot 1^\top) \cdot \Sigma = \frac{1}{2M} (\Sigma - q\cdot q^\top),
\end{align*}
where the last equality follows from $\Sigma^{-1} \cdot q = 1$.

\begin{remark}[The bi-allelic case~2]\label{rem:bi2}
  In the bi-alleic case from Remark~\ref{rem:bi1}, we find the Hessian of $\ell$ (see \eqref{eq:nabla2ell})
  \begin{align}\label{eq:biSigmax}
      \Sigma_x^{-1} := -\nabla^2 \ell(p,q|x) = \tfrac 12 \int \Big(z \frac{\alpha\cdot \alpha^\top}{\langle q, \alpha\rangle^2} + (2-z)\frac{(1-\alpha) \cdot(1-\alpha)^\top}{\langle q, 1-\alpha\rangle^2}\Big) \mu_{p,x}(d\alpha, dz).
  \end{align}
  For large $M$, we can write after some simplifications 
  \begin{align}\label{eq:biSigma}
      \Sigma^{-1}_x \approx \Sigma^{-1} & :=  \int \Big(\frac{\alpha \cdot \alpha^\top}{\langle q, \alpha\rangle} + \frac{(1-\alpha) \cdot (1-\alpha)^\top}{1 - \langle q, \alpha\rangle}\Big) \mu_p(d\alpha) \\ & = (E_K - 1\cdot q^\top) \cdot \int \frac{\alpha \cdot \alpha^\top}{\langle q, \alpha\rangle\langle q, 1-\alpha\rangle} \mu_p(d\alpha)\cdot (E_K - q \cdot 1^\top) + 1\cdot 1^\top\notag
\end{align}
with (recall $\mu_{p,x}$ from \eqref{eq:bimeas1})
$\mu_p(d\alpha) = \mu_{p,x}(d\alpha, \{0,1,2\}) \in \mathcal M_1([0,1]^K)$.
\end{remark}

\end{remark}

\noindent
In our main Theorem below, we are treating the situation where the set of markers (of cardinality $M$) is fixed, but the frequencies $p$ come with some uncertainty since they are only estimated from a reference database of finite size $N$ (diploids). We are going to study the dependence of the ML-estimator $\hat q$ on the distribution of the frequencies $P$. Our main example is {\color{black} a reference database consisting of $N_k$ (diploid) individuals from population $k$, i.e.} $2N_k P_{k\cdot m} = (2N_k P_{kim})_{i=1,...,I} \sim \text{Mult}(2N_k, (p_{kim})_{i=1,...,I})$ for a fixed family $(p_{kim})_{k=1,...,K, i=1,...,I, m=1,...,M}$ and $(N_k)_{k=1,...,K}$ with $N_k \approx r_k N$ for some $r\in\mathbb S^\circ$, and $(2N_k P_{k\cdot m})_{k=1,...,K, m=1,...,M}$ are independent. (Here, $N = N_1 + \cdots + N_K$.) Note that $2N\cdot \mathbb{COV}[P_{kim}, P_{kjm}] = p_{kim}(\delta_{ij}-p_{kjm})/r_k$ in this case. We now formulate the (abstract/mathematical) result. {\color{black} Note that \eqref{eq:T1a} implies a formula for the variance of the estimator of $q$ similar to \eqref{eq:6} in the case of large $M$; see also Remark~\ref{rem:MNinfty} for more details.} Applications involving simulations and forensic genetic databases can be found in Section~\ref{S3}.

\begin{theorem}[Central Limit Theorem for large reference databases]
  \label{T1}
  Let $r\in \mathbb S^\circ, p \in \mathbb S_{I}^{K\times M}$,
  $x = (x_{im})_{i=1,...,I, m=1,...,M} \in \{0,1,2\}^{I\times M}$, and let Assumptions~\ref{ass0} and~\ref{ass1} hold. For the
  log-likelihood $\ell(.,.|x)$ for fixed $x$ from \eqref{eq:L|P}, let
  $P = P^N$ be random and such that $P^N \xrightarrow{N\to\infty} p$
  in probability, and
  $$\sqrt{2N}(P^N_{k\cdot m} - p_{k\cdot m}) \xRightarrow{N\to\infty} Z_{km} \sim N(0, \Pi(r_k,p_{k\cdot m}))) \text{ with }  (\Pi(r,p))_{ij} = r^{-1} \cdot p_{i}(\delta_{ij} - p_{j}),$$ and $(Z_{km})_{k=1,...,K, m=1,...,M}$ are independent. Moreover, let $\hat Q^N$ be the ML-estimator based on $P^N$, i.e.\ $\hat Q^N$ maximizes $q\mapsto \ell(P^N,q|x)$, and $\hat q \in \mathbb S^\circ$ the ML-estimator for infinite $N$, i.e.\ $\hat q$ is the unique maximizer of $q\mapsto \ell(p,q|x)$. Then, 
  \begin{align}\label{eq:T1a}
    \sqrt{4NM}(\hat Q^N - \hat q) \xRightarrow{N\to\infty} Z \sim N(0, \Sigma_x \cdot \Gamma_x
    \cdot \Sigma_x)
  \end{align}
  with $\Sigma_x^{-1}$ as in \eqref{eq:nabla2ell}, 
  \begin{align}
    \Gamma_{x} & := \frac 1{2M} \sum_{m=1}^M \sum_{i,j=1}^I \Delta(x_{im},p_{\cdot im})   \cdot \diag(r^{-1}) \cdot\diag(p_{\cdot im}(\delta_{ij} - p_{\cdot jm})) \cdot \Delta(x_{jm}, p_{\cdot jm})^\top,\label{eq:T1b}
    \\
    \Delta(p,x) & :=\notag   x\Big(\frac{E_K}{\langle \hat q, p\rangle} -
                \frac{p \cdot \hat q^\top}{\langle \hat q, p\rangle^2}\Big).
  \end{align}
\end{theorem}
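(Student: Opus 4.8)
The plan is to treat the statement as a $Z$-estimation (estimating-equation) problem and to linearise the score equation. By Lemma~\ref{l:1}, both estimators solve the same first-order condition whenever they lie in $\mathbb S^\circ$, namely $\nabla\ell(P^N,\hat Q^N|x)=1$ and $\nabla\ell(p,\hat q|x)=1$. The first preliminary step is to establish consistency $\hat Q^N\to\hat q$ in probability. Since $P^N\to p$ in probability and, by Assumption~\ref{ass1}, $q\mapsto\ell(p,q|x)$ is strictly concave with a unique maximiser $\hat q\in\mathbb S^\circ$, a standard argmax-continuity argument yields $\hat Q^N\to\hat q$; in particular $\hat Q^N\in\mathbb S^\circ$ with probability tending to one, so that the score equation is available for large $N$.

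The core step is a joint first-order Taylor expansion of $F(P,q):=\nabla\ell(P,q|x)-1$ about $(p,\hat q)$. Using $F(P^N,\hat Q^N)=0=F(p,\hat q)$ and $\partial_qF(p,\hat q)=\nabla^2\ell(p,\hat q|x)=-\Sigma_x^{-1}$ from \eqref{eq:nabla2ell}, this reads
\begin{align*}
0 = -\Sigma_x^{-1}(\hat Q^N-\hat q) + W + R_N, \qquad W:=\frac{1}{2M}\sum_{m=1}^M\sum_{i=1}^I\Delta(x_{im},p_{\cdot im})(P^N_{\cdot im}-p_{\cdot im}),
\end{align*}
where $R_N$ is a second-order remainder and $W=\partial_PF(p,\hat q)(P^N-p)$, because $\Delta(x_{im},p_{\cdot im})$ from \eqref{eq:T1b} is precisely the Jacobian at $p_{\cdot im}$ of the map $\pi\mapsto x_{im}\,\pi/\langle\hat q,\pi\rangle$ appearing termwise in the score. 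Inverting $\Sigma_x^{-1}$ (invertible by strict concavity) gives the linearisation $\hat Q^N-\hat q=\Sigma_x W + o_P(N^{-1/2})$. As a consistency check, $\hat q^\top\Delta(x_{im},p_{\cdot im})=0$, so $\langle 1,\Sigma_x W\rangle=\langle\hat q,W\rangle=0$: the leading term lies in the tangent space $\{s:\langle s,1\rangle=0\}$, matching the simplex constraint on $\hat Q^N-\hat q$.

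It then remains to identify the limit law of $W$, which is a delta-method computation. By the assumed CLT, $\sqrt{2N}(P^N_{\cdot im}-p_{\cdot im})$ converges to a centred Gaussian whose covariance between alleles $i,j$ at the same marker $m$ is $\diag(r^{-1})\diag(p_{\cdot im}(\delta_{ij}-p_{\cdot jm}))$, with independence across markers. Only same-$m$ cross terms therefore survive, and the limiting covariance of $\sqrt{2N}\,W$ is
\begin{align*}
\frac{1}{4M^2}\sum_{m=1}^M\sum_{i,j=1}^I\Delta(x_{im},p_{\cdot im})\diag(r^{-1})\diag(p_{\cdot im}(\delta_{ij}-p_{\cdot jm}))\Delta(x_{jm},p_{\cdot jm})^\top=\frac{1}{2M}\Gamma_x,
\end{align*}
i.e.\ $\sqrt{2N}\,W\Rightarrow N(0,\tfrac1{2M}\Gamma_x)$. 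Combining this with the linearisation via Slutsky's lemma gives $\sqrt{2N}(\hat Q^N-\hat q)\Rightarrow N(0,\tfrac1{2M}\Sigma_x\Gamma_x\Sigma_x)$, and multiplying by $\sqrt{2M}$ produces exactly \eqref{eq:T1a}.

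The main obstacle is the rigorous control of the linearisation rather than the covariance algebra: one must bound the Taylor remainder $R_N$ uniformly on a neighbourhood of $(p,\hat q)$ and show it is $o_P(N^{-1/2})$ relative to $\Sigma_x W$. This is where consistency of $\hat Q^N$, interiority of $\hat q$ (so that the score is smooth with bounded higher derivatives near $(p,\hat q)$), and invertibility of $\Sigma_x^{-1}$ all enter; once these are in place, the remaining steps are routine.
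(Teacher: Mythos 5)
Your proposal is correct and follows essentially the same route as the paper's proof: linearise the score equation $\nabla\ell=1$ around $(p,\hat q)$, identify $\Delta(x_{im},p_{\cdot im})$ as the derivative of the score in $P$ (the paper's Taylor step \eqref{eq:T16}), apply the assumed CLT for $P^N$, and invert the Hessian $-\Sigma_x^{-1}$. Your accounting of the $1/(2M)$ factor (giving $\sqrt{2N}\,W\Rightarrow N(0,\tfrac{1}{2M}\Gamma_x)$ and hence the $\sqrt{4NM}$ normalisation in \eqref{eq:T1a}) is in fact slightly more explicit than the paper's, and your added consistency step and the tangent-space check $\hat q^\top\Delta=0$ match Remark~\ref{rem:deg}.
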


\begin{remark}[The case of large $M$ and $N$]\label{rem:MNinfty}
  It is possible to combine the results from Remark~\ref{rem:Minfty} and Theorem~\ref{T1}. This then gives the approximation, since $\Sigma_x \xrightarrow{M\to\infty}\Sigma$,
  \begin{align}\label{eq:Vqall}      \mathbb V_q[\hat Q^N - q] & \approx \mathbb V_q[\hat Q^N - \hat q] + \mathbb V_q[\hat q - q] \approx \frac{1}{2M}\Big(\Sigma - q\cdot q^\top + \frac{1}{2N} \Sigma \cdot \Gamma \cdot \Sigma\Big) , \\ \notag
      \Gamma & := \frac{1}{2M} \sum_{m=1}^M \sum_{i,j=1}^I \Delta(p_{\cdot im}) \cdot \diag(r^{-1}) \cdot \diag(p_{\cdot im}(\delta_{ij} - p_{\cdot jm})) \cdot \Delta(p_{\cdot jm}), \\ \Delta(p) & := 2\Big(E_K - \frac{p\cdot q^\top}{\langle q, p\rangle}\Big), \notag     
  \end{align}
  and all matrices are independent of $x$.
  \\
  It has been noted previously that estimates of the variance of $q$ 
  can be used in order to estimate the number of markers needed in order to reduce this variance below some threshold. This was established by \cite{Rosenberg2003} -- based on results by \cite{Millar1991} -- by assuming infinite $N$ and that the allele frequency differential at all loci has a lower bound. Similar thoughts can be applied in the setting of \eqref{eq:Vqall}, leading to lower bounds of $M, N$ in order to reduce the $\max_q \mathbb V_q[\hat Q^N - q]$ below some threshold.
\end{remark}

\begin{remark}[$Z$ is degenerate]\label{rem:deg}
  We note that $1^\top \cdot (\hat Q^N - \hat q) = 0$ since both,
  $\hat Q^N$ and $\hat q$ sum to~1. With $Z$ as in \eqref{eq:T1a}, we
  now show that $1^\top\cdot Z = 0$ as well, i.e.\ $Z$ from
  \eqref{eq:T1a} has a degenerate normal distribution.
  \\
  Indeed: $1^\top Z$ has covariance
  $1^\top \Sigma_x\cdot \Gamma_x\cdot\Sigma_x \cdot 1$. Since $\hat q$ is the ML-estimator, we find that  $\Sigma_x^{-1} \cdot \hat q = \nabla \ell(p,q|x) = 1$ (see Lemma~\ref{l:1}) which implies $\Sigma_x \cdot 1 = \hat q$. Since for $m=1,...,M$,
  \begin{align*}
    \Delta(x,p)^\top \cdot \hat q & = x\Big(\frac{\hat q^\top }{\langle \hat q,p\rangle} - \frac{\hat q^\top \cdot p \cdot \hat q^\top}{\langle \hat q,p\rangle^2}\Big) = 0,
  \end{align*}
  we find that $1^\top \cdot \Sigma_x \cdot \Gamma_x \cdot \Sigma_x \cdot 1 = 0$. This implies that $1^\top \cdot Z = 0$.
\end{remark}

\begin{proof}[Proof of Theorem~1]
  First, we write (recall from \eqref{eq:L|P})
  \begin{align*}
    \ell(P^N,q|x) = C_x + \int z\log(\langle q,\alpha\rangle)\mu_{P^N,x}(d\alpha, dz).
  \end{align*}
  with $C_x$ not depending on $P^N$ and $q$. From
  $P^N \xrightarrow{N\to\infty} p$ in probability, we see that
  $\mu_{P^N,x} \xRightarrow{N\to\infty} \mu_{p,x}$ and therefore
  \begin{equation}
    \label{eq:T15}
    \begin{aligned}
      \nabla\ell(P^N,\hat q|x) & = \int z\frac{\alpha}{\langle \hat q,
        \alpha\rangle} \mu_{P^N,x}(d\alpha, dz)
      \xRightarrow{N\to\infty} \nabla\ell(p, \hat q|x) = 1, \\
      \nabla^2\ell(P^N,\hat q|x) & = -\int z\frac{\alpha \cdot
        \alpha^\top}{\langle \hat q, \alpha\rangle^2}
      \mu_{P^N,x}(d\alpha,dz) \xRightarrow{N\to\infty} -\Sigma_x^{-1}.
    \end{aligned}
  \end{equation}
  We need to make the first convergence more precise and write, using
  a Taylor approximation,
  \begin{equation}
    \label{eq:T16}
    \begin{aligned}
      \sqrt{2N}(\nabla& \ell(P^N,\hat q|x) - 1) = \sqrt{2N}\Big(\Big(\frac 1{2M} \sum_{m=1}^M \sum_{i=1}^I x_{im} \frac{P^N_{\cdot im}}{\langle \hat q, P^N_{\cdot im}\rangle} \Big)-1\Big) \\ & = \frac 1{2M} \sum_{m=1}^M \sum_{i=1}^I x_{im} \Big(\frac{ E_K}{\langle \hat q, p_{\cdot im}\rangle} - \frac{p_{\cdot im} \cdot \hat q^\top}{\langle \hat q, p_{\cdot im}\rangle^2}\Big) \cdot \sqrt{2N}(P^N_{\cdot im} - p_{\cdot
        im}) + o_P(1),
    \end{aligned}
  \end{equation}
  where $o_P(1)$ is a sequence converging to zero in probability. We
  obtain that for $N\to\infty$, the right-hand-side converges weakly to a normally distributed random variable $Y$ with covariance matrix $\Gamma_x$.
Now, using that $\hat Q^N$ is the ML-estimator based on $P^N$, again using a Taylor approximation,
\begin{align*}
    1 & = \nabla\ell(P^N,\hat Q^N|x)
        = \nabla \ell(P^N,\hat q|x) + \nabla^2 \ell(P^N,\hat q|x) \cdot (\hat Q^N - \hat q)
        + o_P(1/\sqrt{N}).
  \end{align*}
  Using the convergence of $\nabla^2\ell(\hat q, P^N|x)$ from
  \eqref{eq:T15}, and the convergence from \eqref{eq:T16}, we find,
  solving the last equation for $(\hat Q^N - \hat q)$ and multiplying
  with $\sqrt{2N}$,
  \begin{align*}
    \sqrt{2N} (\hat Q^N - \hat q) & \xRightarrow{N\to\infty} \Sigma_x \cdot Y.
  \end{align*}
  Now, the result follows since $\Sigma_x$ is symmetric and hence
  $\Sigma_x \cdot Y \sim N(0, \Sigma_x \cdot \Gamma_x \cdot \Sigma_x)$.
\end{proof}

\begin{remark}[The bi-allelic case~3]\label{rem:bi3}
  For Theorem~\ref{T1}, the bi-allelic case in fact allows for some more simplifications. For notation, see Remarks~\ref{rem:bi1} and~\ref{rem:bi2}. Here, we have $r\in \mathbb S^\circ$ as in Theorem~\ref{T1} and $\ell$ as in \eqref{eq:ellbi}. Let $P = P^N$ be random and such that $P^N \xRightarrow{N\to\infty} p$, and  $$\sqrt{2N}(P^N_{km} - p_{km}) \xRightarrow{N\to\infty} Z_{km}  \sim N(0, \Pi(r_k,p_{km}))) \text{ with } \Pi(r,p) = r^{-1} \cdot p(1 - p),$$ and $(Z_{km})_{k=1,...,K, m=1,...,M}$ are independent. As above, let $\hat Q^N$ be the ML-estimator based on $P^N$, and $\hat q \in \mathbb S^\circ$ the ML-estimator for infinite $N$. Then, \eqref{eq:T1a} holds with $\Sigma_x$ from \eqref{eq:biSigmax} and
  \begin{align*}
    \Sigma_x^{-1} & := \frac 12 \int \Big(
    z \frac{\alpha \cdot \alpha^\top}{\langle \hat q, \alpha\rangle^2} + (2-z) \frac{(1-\alpha) \cdot (1-\alpha)^\top}{\langle \hat q, 1-\alpha\rangle^2}
    \Big)\mu_{p,x}(d\alpha,dz),
    \\ \Gamma_x & :=\frac{1}{4M} \int \Big( \widetilde\Delta(z, \alpha) \cdot \diag(r^{-1}) \cdot \diag(p(1-p)) \cdot \widetilde \Delta(z, \alpha)^\top \Big)\mu_{p,x}(d\alpha, dz), \\ \widetilde\Delta(x,p) & := \Delta(x,p) - \Delta(2-x, 1-p) = x\Big(\frac{E_K}{\langle \hat q, p\rangle} - \frac{p \cdot \hat q^\top}{\langle \hat q, p\rangle^2}\Big) - (2-x)\Big(\frac{E_K}{\langle \hat q, 1-p\rangle} - \frac{(1-p) \cdot \hat q^\top}{\langle \hat q, 1-p\rangle^2}\Big).
  \end{align*}
  The form of $\Sigma_x$ here follows directly from \eqref{eq:T1a} and \eqref{eq:bimeas1}. For $\Gamma_x$ from \eqref{eq:T1b}, we write by evaluating the sums over $i$ and $j$, which leads -- using $\Pi(r,p) := \diag(r^{-1}) \cdot \diag(p(1-p))$ -- from \eqref{eq:T1b} to
  \begin{align*}
    \Gamma_x & = \frac{1}{4M^2} \sum_{m=1}^M  \Delta(x_m, p_{\cdot m})\cdot \Pi(r,p_{\cdot m}) \cdot \Delta(x_m, p_{\cdot m})^\top \\ & \qquad \qquad \qquad +  \Delta(2-x_m, 1-p_{\cdot m})\cdot \Pi(r,p_{\cdot m}) \cdot \Delta(2-x_m, 1-p_{\cdot m})^\top \\ & \qquad \qquad \qquad - \Delta(x_m, p_{\cdot m}) \cdot \Pi(r,p_{\cdot m})\cdot \Delta(2-x_m, 1-p_{\cdot m})^\top \\ & \qquad \qquad \qquad - \Delta(2-x_m, 1-p_{\cdot m}) \cdot \Pi(r,p_{\cdot m}) \cdot \Delta(x_m, p_{\cdot m})^\top
    \\ & = \frac{1}{4M^2} \sum_{m=1}^M ( \Delta(x_m, p_{\cdot m}) - \Delta(2-x_m, 1-p_{\cdot m})) \cdot \Pi(r, p_{\cdot m}) \cdot ( \Delta(x_m, p_{\cdot m}) - \Delta(2-x_m, 1-p_{\cdot m}))^\top  
\end{align*}
and the result follows from the definitions of $\widetilde\Delta(x,p)$ and $\Pi(r,p)$.

This setting also allows to combine the limits for $M\to\infty$ and $N\to\infty$. We obtain \eqref{eq:Vqall} with $\Sigma$ from \eqref{eq:biSigma} and
\begin{align*}
\Gamma & := \frac 1{4M} \int \Big(\widetilde\Delta(\alpha) \cdot \diag(r^{-1}) \cdot \diag(p(1-p))\cdot \widetilde\Delta(\alpha)^\top\Big) \mu_p(d\alpha), \\ \widetilde\Delta (p) & = 2\Big( \frac{(1-p)\cdot \hat q^\top}{\langle \hat q, (1-p)\rangle} - \frac{p \cdot \hat q^\top}{\langle \hat q, p\rangle} \Big) = 2\frac{1 \cdot \hat q \cdot ^\top \cdot p \cdot \hat q^\top - p\cdot \hat q^\top}{\langle \hat q, p\rangle\langle \hat q, 1-p\rangle}
\end{align*}
\end{remark}

\subsection{The ML-estimator as a stable fixed point}
\noindent
Computing the ML-estimator $\hat q$, i.e.\ solving
$\nabla \ell(p,q|x) = 1$ for $q$ is possible by Newton-Raphson
iteration. However, care must be taken in order to ensure that
$\langle \hat q,1\rangle = 1$ and $q_k\geq 0$ for all $k=1,...,K$. We
have found a different iterative way of finding $\hat q$ which
guarantees that $\hat q \in \mathbb S$. It is based on finding a
stable fixed point of some $\mathbb S$-valued function
$q\mapsto L(p,q|x)$.

\begin{theorem}[The ML-estimator as a stable fixed point]\label{T2}
  Let $p \in \mathbb S_{I}^{K\times M}$,
  $x = (x_{im})_{i=1,...,I, m=1,...,M} \in \{0,1,2\}^{I\times M}$, and
  let Assumptions~\ref{ass0} and~\ref{ass1} hold (i.e., $\ell$ is
  strictly concave). Let $\ell(p,.|x)$ be as above and define
  \begin{align*}
    &L(p,.|x): \begin{cases} \mathbb S & \to \mathbb S \\ q & \mapsto
      \diag(q) \cdot \nabla \ell(p,q|x) = \diag(q) \cdot
      \displaystyle \int z\frac{\alpha}{\langle q, \alpha\rangle}
      \mu_{p,x}(d\alpha, dz).\end{cases}
  \end{align*}
  Then, for $q^\ast \in \mathbb S^\circ$, the following statements are equivalent:
  \begin{enumerate}
  \item $q^\ast$ is the unique and global maximum of $\ell(p,.|x)$;
  \item $q^\ast$ is a local maximum of $\ell(p,.|x)$;
  \item $q^\ast$ is a locally stable fixed point
    of $L(p,.|x)$.
  \item $q^\ast$ is the unique locally stable fixed point of $L(p,.|x)$.
  \end{enumerate}
\end{theorem}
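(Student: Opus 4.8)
The plan is to arrange the four statements into a cycle anchored at~(1), exploiting that for interior points the fixed-point condition for $L$ coincides with the stationarity condition $\nabla\ell(p,q^\ast|x)=1$ already analyzed in Lemma~\ref{l:1}. The equivalence of (1) and (2) is immediate: by Lemma~\ref{l:1} the map $q\mapsto\ell(p,q|x)$ is strictly concave on the convex compact set $\mathbb S$ (under Assumption~\ref{ass1}), so every local maximum is the unique global maximum. For the link to $L$, I would first note that $L$ maps $\mathbb S$ into $\mathbb S$, since the components are nonnegative and $\langle 1,L(p,q|x)\rangle=\int z\,\mu_{p,x}(d\alpha,dz)=1$, and that for $q^\ast\in\mathbb S^\circ$ the fixed-point equation $\diag(q^\ast)\nabla\ell(p,q^\ast|x)=q^\ast$ is equivalent, after dividing by the strictly positive $q^\ast_k$, to $\nabla\ell(p,q^\ast|x)=1$, i.e.\ by Lemma~\ref{l:1} to $q^\ast$ being the maximizer. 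Thus ``interior fixed point'' and~(1) already coincide, and the remaining work is to control \emph{stability} and \emph{uniqueness}.

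For stability I would linearize $L$ at $q^\ast$. Writing $L_k(q)=q_k(\nabla\ell(p,q|x))_k$ and using \eqref{eq:nablaell}--\eqref{eq:nabla2ell} together with $\nabla\ell(p,q^\ast|x)=1$, the Jacobian is
\begin{align*}
  DL(q^\ast) = E_K - \diag(q^\ast)\,\Sigma_x^{-1}.
\end{align*}
Since the dynamics live on $\mathbb S$, only the restriction to the tangent space $T:=\{s\in\mathbb R^K:\langle 1,s\rangle=0\}$ is relevant. Using $\Sigma_x^{-1}q^\ast=\nabla\ell(p,q^\ast|x)=1$, hence $(q^\ast)^\top\Sigma_x^{-1}=1^\top$ by symmetry of $\Sigma_x^{-1}$, one checks $1^\top DL(q^\ast)s=0$ for $s\in T$, so $DL(q^\ast)$ leaves $T$ invariant. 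Conjugating by $\diag(q^\ast)^{1/2}$ turns $\diag(q^\ast)\Sigma_x^{-1}$ into the symmetric matrix $\diag(q^\ast)^{1/2}\Sigma_x^{-1}\diag(q^\ast)^{1/2}$, so its eigenvalues $\nu$ are real, and the eigenvalues of $DL(q^\ast)|_T$ are exactly the numbers $1-\nu$.

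The heart of the matter, and the step I expect to be hardest, is to show that these $\nu$ lie in $(0,1]$, equivalently the quadratic-form bound $s^\top\Sigma_x^{-1}s\le\sum_k s_k^2/q^\ast_k$ for $s\in T$. The lower bound $\nu>0$ is precisely the strict concavity from Lemma~\ref{l:1} (Assumption~\ref{ass1}): $s^\top\Sigma_x^{-1}s=\int z\langle s,\alpha\rangle^2/\langle q^\ast,\alpha\rangle^2\,\mu_{p,x}>0$ for $0\neq s\in T$. For the upper bound I would introduce, for each $\alpha$, the posterior weights $\pi_k:=q^\ast_k\alpha_k/\langle q^\ast,\alpha\rangle$ (a probability vector) and apply Jensen's inequality to $t\mapsto t^2$:
\begin{align*}
  \frac{\langle s,\alpha\rangle^2}{\langle q^\ast,\alpha\rangle^2}
  = \Big(\sum_k \tfrac{s_k}{q^\ast_k}\pi_k\Big)^2
  \leq \sum_k \tfrac{s_k^2}{(q^\ast_k)^2}\pi_k
  = \frac{1}{\langle q^\ast,\alpha\rangle}\sum_k \frac{s_k^2\alpha_k}{q^\ast_k}.
\end{align*}
Integrating against $z\,\mu_{p,x}$ and using $\int z\,\alpha_k/\langle q^\ast,\alpha\rangle\,\mu_{p,x}=(\nabla\ell(p,q^\ast|x))_k=1$ gives the claim. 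Hence $\nu\in(0,1]$, the eigenvalues $1-\nu$ of $DL(q^\ast)|_T$ lie in $[0,1)$, the spectral radius on $T$ is strictly below~$1$, and $q^\ast$ is a locally (asymptotically) stable fixed point; this is (1)$\Rightarrow$(3), while (3)$\Rightarrow$(1) is just ``stable fixed point $\Rightarrow$ fixed point'' combined with the interior characterization above.

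Finally, for the uniqueness in~(4) I would rule out stable fixed points on $\partial\mathbb S$. Each face $\{q_k=0\}$ is invariant under $L$ (there $L_k(q)=q_k(\nabla\ell)_k=0$), and at a boundary fixed point $\tilde q$ with vanishing set $B\neq\emptyset$ one has $(\nabla\ell(p,\tilde q|x))_k=1$ for $k\notin B$. If additionally $(\nabla\ell(p,\tilde q|x))_k\le 1$ for all $k\in B$, then $\tilde q$ satisfies the first-order (Karush--Kuhn--Tucker) conditions for maximizing the concave $\ell$ over $\mathbb S$ and would be the global maximizer, which is impossible since that maximizer is the interior point $q^\ast$. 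Hence some $k\in B$ has $(\nabla\ell(p,\tilde q|x))_k>1$, and along the inward direction $e_k$ the coordinate evolves as $q_k\mapsto q_k(\nabla\ell)_k\approx(\nabla\ell(p,\tilde q|x))_k\,q_k$ with multiplier exceeding~$1$, so $\tilde q$ is unstable. Consequently the only locally stable fixed point is the interior maximizer $q^\ast$, giving (1)$\Rightarrow$(4); together with the trivial (4)$\Rightarrow$(3) this closes the cycle and establishes all four equivalences.
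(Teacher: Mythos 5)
Your proposal is correct, and it reaches the same linearization $\nabla L(p,q^\ast|x)=E_K-\diag(q^\ast)\Sigma_x^{-1}$ as the paper, but the two key steps are handled by genuinely different arguments. For the spectral bound, the paper observes that $B:=\diag(q^\ast)\Sigma_x^{-1}$ is the transpose of a stochastic matrix with $Bq^\ast=q^\ast$, conjugates by $\diag(\sqrt{q^\ast})$ to get real positive eigenvalues, and then invokes Perron--Frobenius theory to conclude the eigenvalues lie in $(0,1]$; you instead prove the quadratic-form inequality $s^\top\Sigma_x^{-1}s\le\sum_k s_k^2/q^\ast_k$ directly via Jensen's inequality applied to the posterior weights $\pi_k=q^\ast_k\alpha_k/\langle q^\ast,\alpha\rangle$, combined with $\nabla\ell(p,q^\ast|x)=1$. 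Your route is more elementary and self-contained (it avoids the paper's appeal to irreducibility and aperiodicity of the chain $B^\top$, which the paper asserts without verification and which is not actually needed, since a multiple eigenvalue $1$ of $B$ would still give eigenvalue $0$ of $E_K-B$); you also correctly restrict attention to the tangent space $\{s:\langle 1,s\rangle=0\}$, whereas the paper bounds the full spectrum. The second difference concerns uniqueness in statement~4: the paper derives $3.\Rightarrow 4.$ purely from the already-established equivalences, which only covers competing stable fixed points in $\mathbb S^\circ$; you additionally rule out locally stable fixed points on $\partial\mathbb S$ via a KKT argument showing that any boundary fixed point other than the global maximizer has some coordinate $k$ with $(\nabla\ell)_k>1$ and hence an expanding direction into the simplex. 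This closes a case the paper leaves implicit (indeed, the remark following the theorem concedes that boundary behavior is not fully analyzed), at the cost of a slightly informal linearization at the boundary; the only minor imprecision in your write-up is the claim that the eigenvalues of $DL(q^\ast)|_T$ are \emph{exactly} the numbers $1-\nu$, when they form a subset of these, which is harmless since all $\nu$ lie in $(0,1]$.
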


\begin{remark}[Computing $\hat q$ using $L$]
  Given that Assumption \ref{ass1} holds, the above result together
  with Lemma~\ref{l:1} implies that $q^\ast \in \mathbb S^\circ$ is
  ML-estimator if and only if $L(p,q^\ast|x) = q^\ast$. Hence, if the
  iteration $q_{n+1} := L(p,q_n|x)$, converges, we are sure to have
  found the ML-estimator. In our numerical applications in
  Section~\ref{S3}, we will compute ML-estimators in this way.
  \\
  One word of caution is necessary if
  $q\in \partial \mathbb S = \mathbb S \setminus \mathbb S^\circ$, the
  edge of $\mathbb S$. In this case, Lemma~\ref{l:1} and  Theorem~\ref{T2} are inconclusive, but our numerical results suggest  ML-estimators are still stable fixed points of $L$. Moreover, note that $\partial \mathbb S$ is the union of lower-dimensional  simplices, and often we can apply Lemma~\ref{l:1} and  Theorem~\ref{T2} on these lower-dimensional manifolds.
\end{remark}

\begin{proof}[Proof of Theorem~\ref{T2}]
  Clearly, $1.\Rightarrow 2.$ and $4.\Rightarrow 3.$ are immediate. By
  strict concavity of $\ell$, we also find $2.\Rightarrow 1.$ Assuming
  that $2.\iff 3.$ is established, assume that $4.$ does not hold, but
  $3.$ Then, there is a second locally stable fixed point
  of $L(p,.|x)$. This would (by $3.\iff 2.\iff 1.$) be
  another global maximum of $\ell$, which is a contradiction. Hence,
  $3.\Rightarrow 4.$ follows, and we are left with showing $2.\iff 3.$
  For this, we will use the fact shown in Lemma~\ref{l:1} that
  $q^\ast$ maximizes $\ell(p,.|x)$ if and only if
  $\nabla\ell(p,q|x)=1$. Recall $\nabla\ell(p,.|x)$ and
  $\nabla^2\ell(p,.|x)$ from \eqref{eq:nablaell} and
  \eqref{eq:nabla2ell}, we start by noting that
    \begin{equation}
      \label{eq:anabla2}
      \begin{aligned}
        \nabla L(p,q|x) & = \diag(\nabla \ell(p,q|x)) + \diag(q) \cdot
        \nabla^2\ell(p,q|x).
      \end{aligned}
    \end{equation}
    Since $\ell$ is strictly concave, for all $q \in \mathbb S$, all
    eigenvalues of $\nabla^2\ell(p,q|x)$ are
    negative. \\
    $2.\Rightarrow 3.$ From the definition of $L$ and Lemma~\ref{l:1},
    we already see that if $q^\ast \in \mathbb S^\circ$ is
    global/local maximum of $\ell(p,.|x)$, then
    $\nabla\ell(p,q^\ast|x) = 1$, hence
    $L(p,q^\ast|x) = \diag(q^\ast) \cdot 1 = q^\ast$, i.e.\ $q^\ast$
    is a fixed point of $L(p,.|x)$. It remains to show local
    stability.  According to the Stable-Manifold-Theorem for discrete
    dynamical systems (see e.g.\ Theorem 4.7 of \cite{Galor2007}), we
    must show that all eigenvalues of $\nabla L(p,q^\ast|x)$ have
    absolute value $<1$.  We have from \eqref{eq:anabla2}
    $$\nabla L(p,q^\ast|x) = E_K -B, \qquad B = - \diag(q^\ast) \cdot \nabla^2 \ell(p,q^\ast|x)
    = \int z \frac{\diag(q^\ast) \cdot \alpha \cdot
      \alpha^\top}{\langle q^\ast, \alpha\rangle^2} \mu_{p,x}(d\alpha,
    dz).$$ We note that all entries of $B$ are non-negative and
    \begin{align*}
      1^\top \cdot B & =\int z \frac{(q^\ast)^\top \cdot \alpha \cdot 
                       \alpha^\top}{\langle q^\ast, \alpha\rangle^2}\mu_{p,x}(d\alpha, dz)
                       = \int z \frac{\alpha^\top}{\langle q^\ast, \alpha\rangle}\mu_{p,x}(d\alpha, dz)
                       = \nabla\ell(p,q^\ast|x)^\top
                       = 1^\top,
    \end{align*}
    i.e.\ $B$ is the transpose of a stochastic matrix. Moreover,
    $$B\cdot q^\ast = \int z \frac{\diag(q^\ast) \cdot \alpha \cdot \alpha^\top \cdot q^\ast}
    {\langle q^\ast,\alpha\rangle^2} \mu_{p,x}(d\alpha,dz) =
    \diag(q^\ast) \cdot \nabla\ell(p,q^\ast) = \diag(q^\ast) \cdot 1 =
    q^\ast,$$ i.e.\ $q^\ast$ is a right-eigenvector for the
    eigenvalue~1. We also see that the Markov chain with transition
    matrix $B^\top$ is irreducible and aperiodic, hence there are (up
    to constant factors) unique left- and right-eigenvectors of $B$ to
    the eigenvalue~1. Also, note that the eigenvalues of $B$ and of
    $$ \diag(\sqrt{q^\ast}) \cdot \Big(\int z \frac{\alpha\cdot \alpha^\top}{\langle q^\ast, \alpha\rangle^2}
    \mu_{p,x}(d\alpha, dz)\Big) \cdot \diag(\sqrt{q^\ast})$$ coincide (see e.g.\ \citealp{HornJohnson2012}, Theorem 1.3.22),
    the latter being symmetric and positive definite (since
    $q^\ast \in \mathbb S^\circ$ and strict concavity of
    $\ell(p,.|x)$). All eigenvalues of such a matrix are real and in
    $(0,\infty)$. Since $B^\top$ is stochastic, this implies --
    according to Perron-Frobenius-theory -- that all eigenvalues of
    $B$ are in $(0,1]$. Hence, all eigenvectors of
    $\nabla L(p,q^\ast|x) = E_K-B$ are in $[0,1)$, and
    $2.\Rightarrow 3.$ follows.

    \noindent
    $3.\Rightarrow 2.$: If $q^\ast \in \mathbb S^\circ$ is a fixed
    point of $L$, we find that, for any $k=1,...,K$,
    $$ q_k^\ast = L(q^\ast)_k =
    \Big(\diag(q^\ast) \cdot \int z \frac{\alpha}{\langle q^\ast,
      \alpha\rangle} \mu_{p,x}(d\alpha, dz)\Big)_k = q_k^\ast \int z
    \frac{\alpha_k}{\langle q^\ast, \alpha\rangle} \mu_{p,x}(d\alpha,
    dz) .$$ Since $q_k^\ast > 0$, this implies
    $\int z \frac{\alpha_k}{\langle q^\ast, \alpha\rangle}
    \mu_{p,x}(d\alpha,dz) = 1$, i.e.\ $\nabla \ell(p,q^\ast|x) =
    1$. This shows 2.
\end{proof}

\begin{remark}[The bi-allelic case~4]\label{rem:bi4}
  For bi-allelic markers, we note that -- using the notation from \eqref{eq:bimeas1}  \begin{equation}
  \label{eq:ellbi4}
    \begin{aligned} 
    L(p,q|x) & = \frac 12 \diag(q) \cdot  \displaystyle \int \Big(z\frac{\alpha}{\langle q, \alpha\rangle} + (2-z)\frac{1-\alpha}{\langle q, 1-\alpha\rangle} \Big)  \mu_{p,x}(d\alpha, dz).
  \end{aligned}
  \end{equation}
\end{remark}

\section{Applications}
\label{S3}
We applied our results to both, simulated data and real data from a database used in forensic genetics. We implemented the estimators of the variance from Remark~\ref{rem:Minfty}, Theorem~\ref{T1} and Remark~\ref{rem:MNinfty}, and rely on finding the ML-estimator by using the iterative scheme based on Theorem~\ref{T2}. We focus on an implementation of the bi-allelic case only (see Remarks~\ref{rem:bi1}, \ref{rem:bi2}, \ref{rem:bi3}, \ref{rem:bi4}). Our implementation of all methods uses {\tt R} and can be downloaded from \url{http://github.com/pfaffelh/MNinfty}. Before we come to our results, we need some graphical tool.

\subsection{Displaying the variance {\color{black} in barplots}}\label{ss:31}
Estimators for the individual admixture $q$, based on genetic data $x$ from a single individual, are usually shown using a barplot, where each population comes with its own color, and sizes of bars in the same color are proportional to the estimated ancestry proportion from this population. In order to obtain a graphical representation of the uncertainty in the estimation of $q$  {\color{black} (i.e.\ the variance of the ML-estimator $\hat q$)}, we add to this barplot the estimators in variance from our theoretical results; {\color{black}see Figure~\ref{fig1}. A confidence region for $q$ can be computed from the covariance matrix, as e.g.\ given by the right hand side of \eqref{eq:Vqall}, denoted by $\Lambda$ in the sequel; for some more details see Remark~\ref{rem:kl}. If one is interested in the variance of $\hat q_k$ for some $k$, i.e.\ the individual admixture from one specific reference population, the answer is $\Lambda_{kk}$. However, it is as well interesting to understand covariances between $\hat q_k$ and $\hat q_\ell$, since they provide information on confounding effects between pairs of populations. In order to display the complete picture for variances and covariances, we use the orthogonal eigenvectors of $\Lambda$ which span the confidence region for the estimation of $q$, where we order the corresponding eigenvalues by their corresponding eigenvalues. The resulting $2(K-1)$ corners of this region can be displayed using error bars on top of a barplot; see the left part of Figure~\ref{fig1}. From these error bars, it is as well possible to read off the variation of single $\hat q_k$s. We note, however, that the true distribution of the ML-estimator is not normal, and even not symmetric around $q$, but follows a different shape. We nevertheless plot symmetric error bars since the true distribution is not accessible. Only for $K=3$, it is possible to display the three components of $q$ in a triangle, as well as their uncertainty; see the right part of Figure~\ref{fig1}. Here, eigenvectors of $\Lambda$ with lengths proprtional to the corresponding standard deviation, can be plotted directly in the plane.} Since the triangle plot only works for $K=3$, we will stick with barplots with error bars in the following.

\begin{remark}[More details on the confidence region]\label{rem:kl}
{\color{black}Let us add more detail how to obtain the error bars in a barplot.} Denote the right hand side in \eqref{eq:Vqall} by $\Lambda$, which is a symmetric $K\times K$ matrix. Hence, $\Lambda$ has orthogonal eigenvectors (with real, non-negative eigenvalues), and we know that $\Lambda \cdot 1 = 0$ (see Remarks~\ref{rem:Minfty} and~\ref{rem:deg}). Now, let $v$ be a unit-length eigenvector of $\Lambda$ for the eigenvalue $\lambda > 0$. By orthogonality, we find $\langle v, 1\rangle = 0$, which implies that $q + hv \in \mathbb S^\circ$ if $q\in\mathbb S^\circ$ and $h$ is small enough. In addition, assuming that the ML-estimator $\hat Q^N$ is normally distributed, we conclude that $v\cdot (\hat Q^N - q)\cdot v$ is normally distributed with mean~$q$ and variance $v^\top \cdot \Lambda \cdot v = \lambda$. From this, we see that $\mathbb P(|v\cdot (\hat Q^N - q)| > 2\sqrt\lambda) \approx 5\%,$ and we are able to find a confidence region (a subset of $\mathbb S$) for the estimation of $q$, when iterating over all eigenvectors. For the error bars, on top of the point estimate $\hat q$ of $q$, we display $\hat q \pm 2\sqrt{\lambda_i} v_i$, $i=1,...,K-1$ (but not exceeding $0$), where {\color{black}$v_i$} is the eigenvector of $\Lambda$ for the eigenvalue $\lambda_i, i=1,...,K-1$, starting with the largest eigenvalue. 
\end{remark}

\begin{figure}
\begin{center}
  \includegraphics[width=0.75\textwidth]{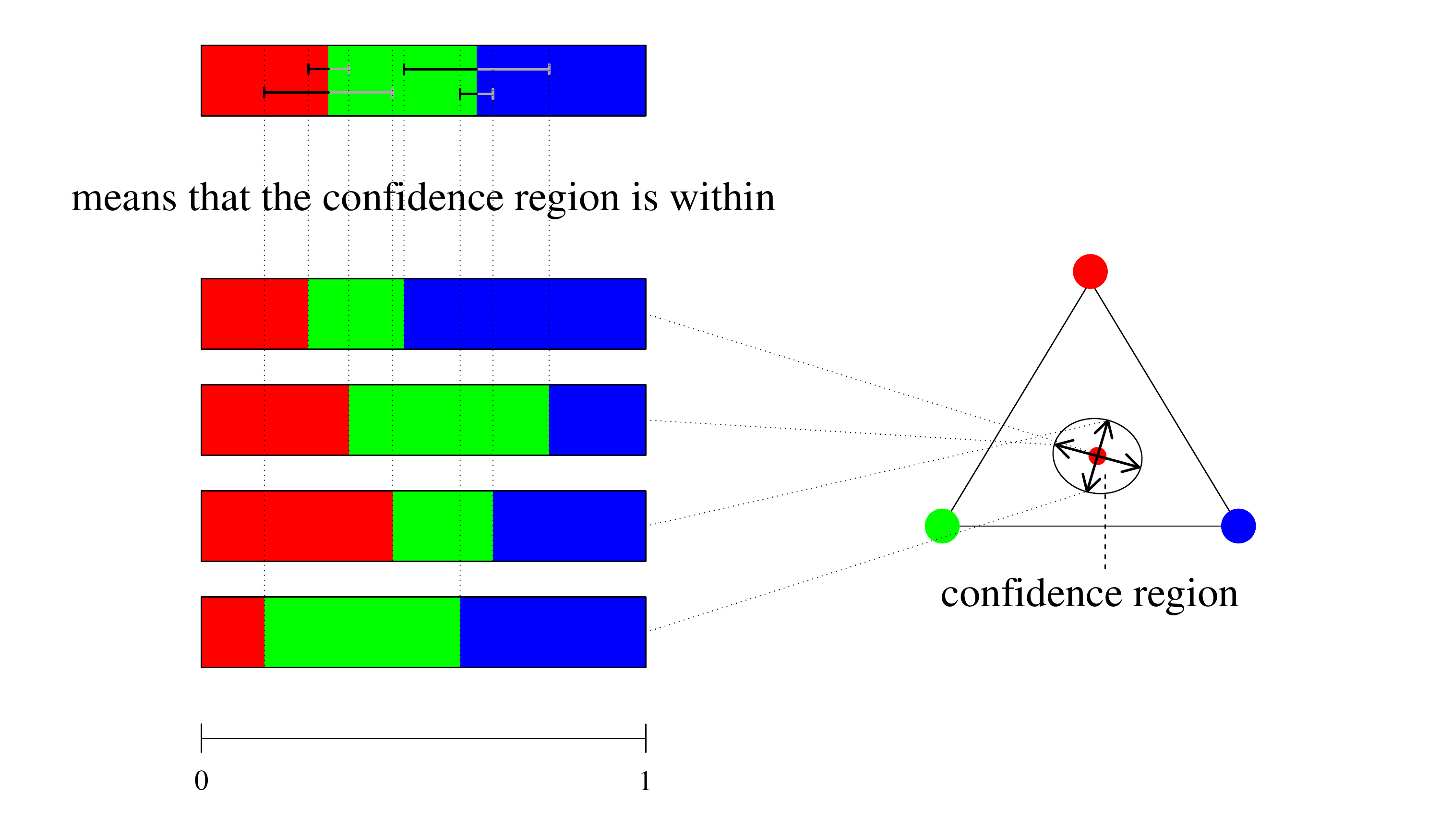}
  \caption{{\color{black} An illustration of the uncertainty in estimation of $q$ if there are $K=3$ ancestral populations.} On the left, the top row will be used in the sequel in order to display the uncertainty in the estimation of $q$ using a standard structure barplot. {\color{black} Here, we display $\hat q$ as usual using different colors in the barplot. On top of that, the error bars end at $\hat q \pm 2\sqrt{\lambda_i}v_i$, where $\lambda_i$ is the eigenvalue for the eigenvector $v_i$ of the covariance matrix of $\hat q$, $i=1,...,K-1$; see Remark~\ref{rem:kl}. For $i=1$  and $i=2$, we use the upper and lower error bars, respectively, while the distinction between the $\pm$-parts are displayed using the black and grey error bars in the top row. In other words, the corners of the confidence region are as displayed in the lower four rows. On the right, t}he same uncertaintly is displayed, but in a triangle plot, which is only available for $K=3$. The corners of the confidence region in the triangular plot match exactly the corresponding barplots on the left. The true $q$ is $(0.4, 0.3, 0.3)${\color{black}, and the ML estimate ist $\hat q \approx (0.29, 0.33, 0.38)$.}}
  \label{fig1}    
\end{center}
\end{figure}

\subsection{Simulations}
\label{S32}Recall the three sources of uncertaintly in estimating $q$ from Section~\ref{S23}. In our simulations, we start with (3) -- AIMs are not perfectly ancestry informative for all population distinctions -- and (1) -- The set of AIMs has imperfectly known ancestral population allele frequencies. Here, we simulate $M$ bi-allelic markers with independent $\beta(5,5)$ distributed allele frequencies in $K=3$ populations. (For such markers, the $F_{ST}$ between two populations is $0.047$ on average.) For the reference database, we choose $N$ (diploids) based on these allele frequencies, and estimate $q$ for an admixed sample. For estimates of the variance in $q$, there are two possibilities. Either, we use the result from Remark~\ref{rem:MNinfty}, or we use bootstrapping. For the latter, we resample our reference database, i.e.\ we choose $N$ individuals and $M$ markers with replacement. Doing this $B$ times gives our bootstrap samples, and each such sample gives rise to a new $\hat q$ for the test sample. This leads to $B$ estimates of $q$, and we can evaluate that empirical variances and co-variances. We see in Figure~\ref{fig2} that the variances from Theorem~\ref{T1} (or Remark~\ref{rem:MNinfty}) and the bootstrap estimates match {\color{black} well, at least qualitatively}. In addition, we see from this figure that increasing $M$ is much more efficient in reducing the uncertainty in the estimate of $q$ than increasing $N$: As can also be see from \eqref{eq:Vqall}, for a test sample with $q = (0.4, 0.3, 0.3)$, the variance in estimates of $q$ decreases linearly in $M$, but level off for large $N$ if $M$ is still moderate. We elaborate on this finding more deeply in Figure~\ref{fig7}. Here, for $K=3$ and $q=(0.4, 0.3, 0.3)$ as above, we can see the effects of finite $M$ and $N$ on the variance of $\hat q$ as in \eqref{eq:Vqell2}. In Figure~\ref{fig7}(A), we see that a larger number of markers $M$ leads to significant lower variance. The contribution of finite $N$ to the total variance decreases for large $N$, as shown in Figure~\ref{fig7}(B). 

\begin{figure}
  \includegraphics[width=0.95\textwidth]{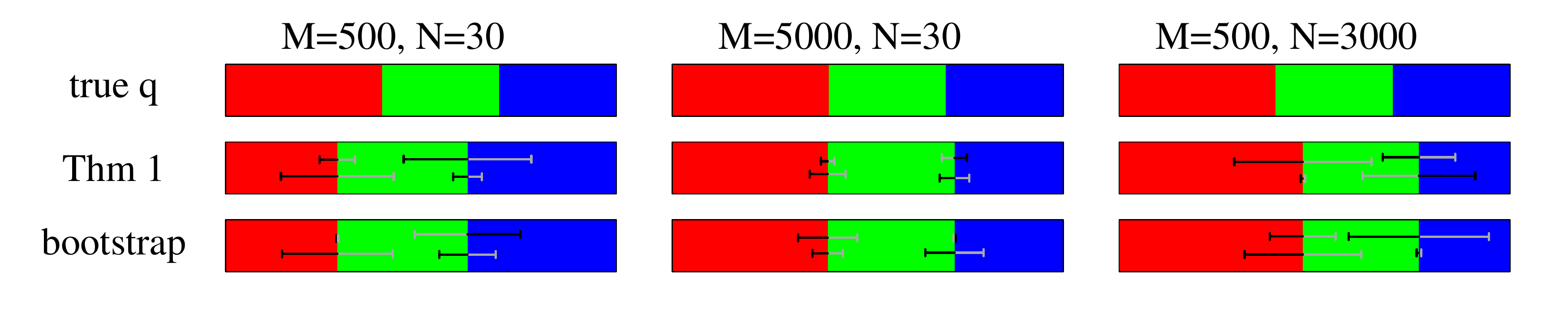}
  \caption{For $K=3$ ancestral populations and a single admixed test sample with $q = (0.4, 0.3, 0.3)$, we provide estimates of $q$ using a different number of markers $M$ and different sizes $N$ of the reference database. All allele frequencies are $\beta(5,5)$-distributed, {\color{black}hence have mean $0.5$ and standard deviation $\approx 0.15$}.}
  \label{fig2}
\end{figure}

\begin{figure}
\hspace{3cm}(A) \hspace{7.5cm}(B)
\begin{center}
\includegraphics[width=0.43\textwidth]{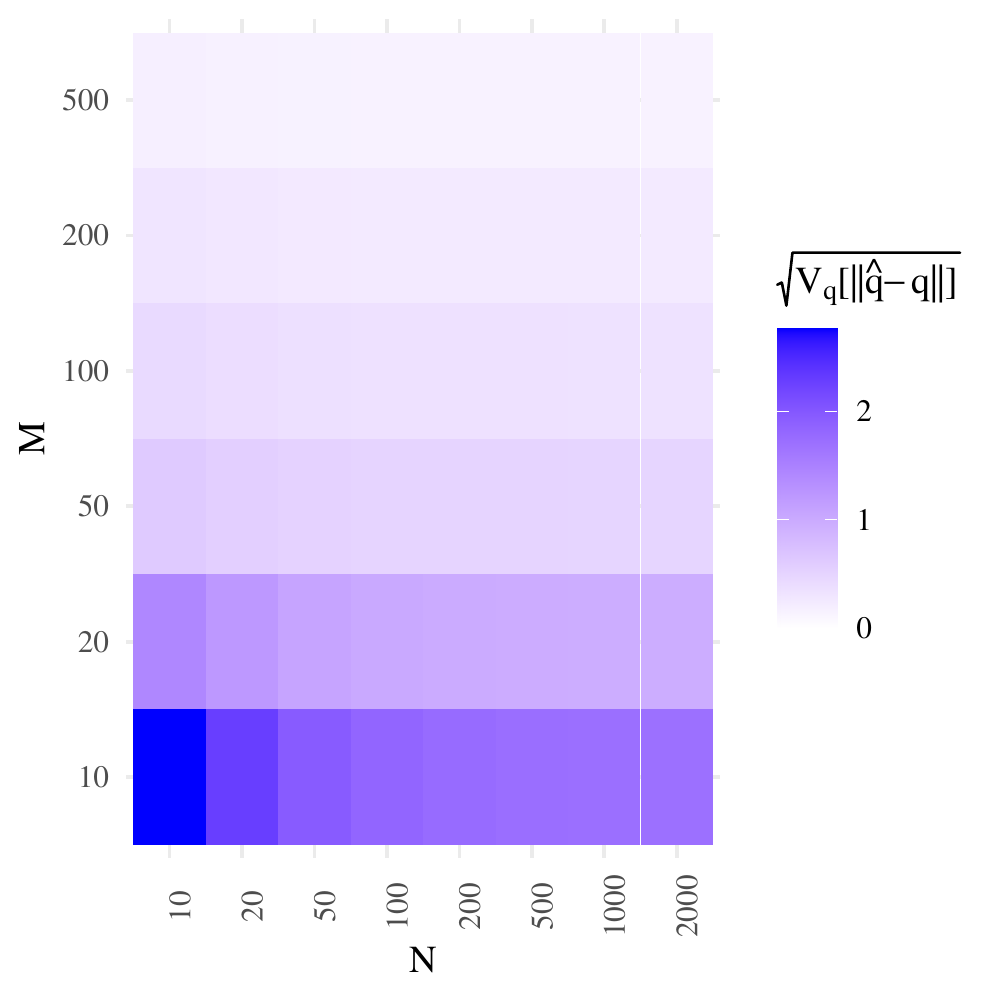}
\includegraphics[width=0.53\textwidth]{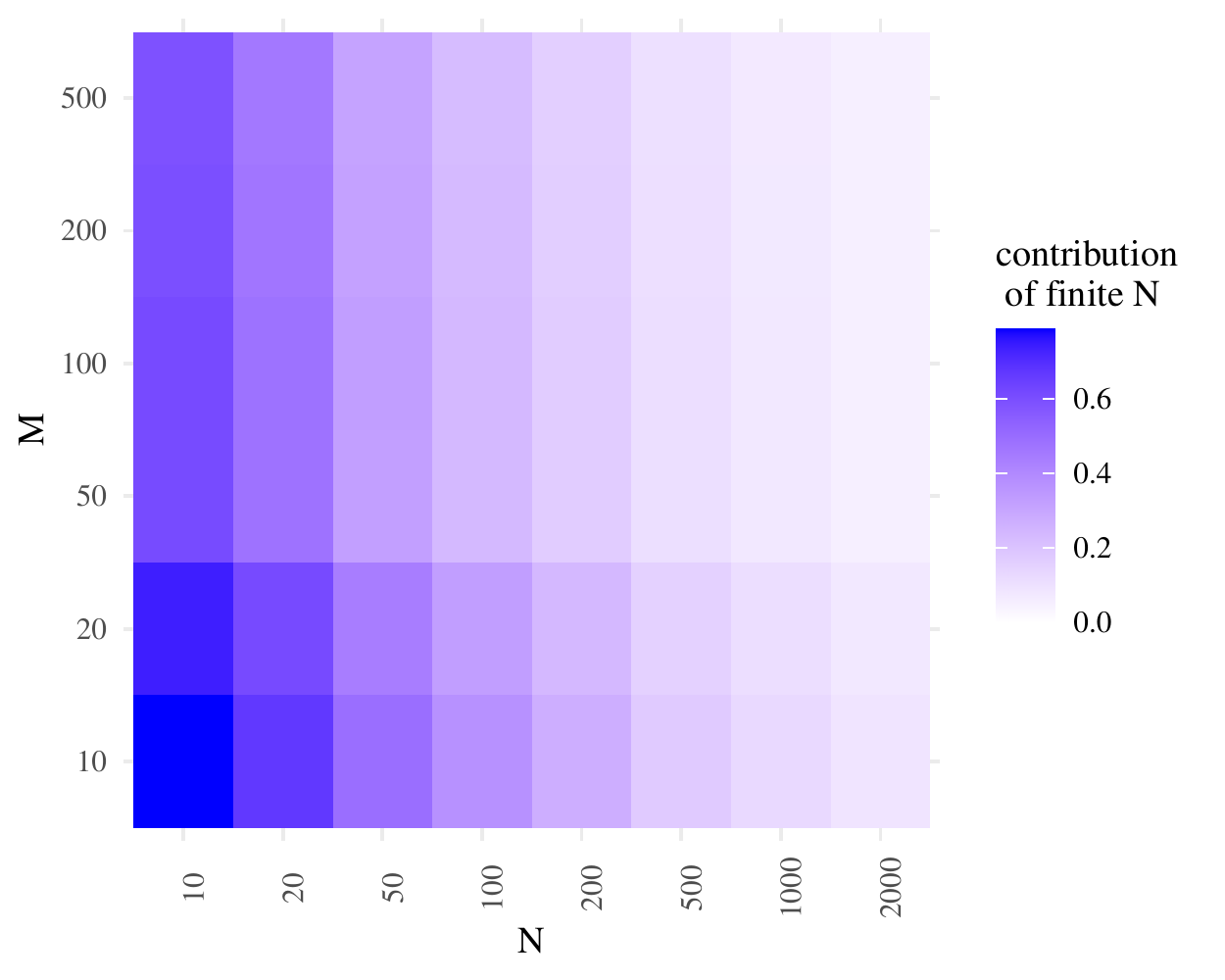}
\end{center}
\caption{Effects of finiteness of $M$ and $N$ on the variance {\color{black} of $\hat q$}. (A) Estimated standard deviation, measured as an $L^2$-distance between $\hat q$ and $q$, as in \eqref{eq:Vqell2}. True $q$ equals $(0.4, 0.3, 0.3)$ for three populations, where all alleles have $\beta(5,5)$-distributed frequencies.}
  \label{fig7}
\end{figure}

For (2) -- uncertainty in the ancestral populations contributing to the test sample -- we did two kinds of analysis. First, we extended the number of populations in the reference database up to $K=10$, although the test sample is admixed only from~2 populations. We see in Figure~\ref{fig3} that the overall picture of the two contributing populations also applies for large $K$, but some fraction is estimated to come from other populations, but also comes with a large variance. Second, we simulated the scenario of a sample originating outside of the reference database; see Figure~\ref{fig4}. So, the reference database has $K=3$ populations, but the test sample originates from a fourth population; see the middle column in Figure~\ref{fig4}. Note that the results (point estimators and their variances) can hardly be distinguished from results of individuals admixed from all three populations in the reference database (left column in Figure~\ref{fig4}). However, we note that the log-likelihoods for the individuals outside of the reference population at $\hat q$ (see \eqref{eq:ellbi}) is on average 90,7 units below the truely admixed individuals. For individuals admixed from one population inside and outside of the reference database (right column in Figure~\ref{fig4}), we still find an average difference of 28,7. 

\begin{figure}
\begin{center}
  \includegraphics[width=0.45\textwidth]{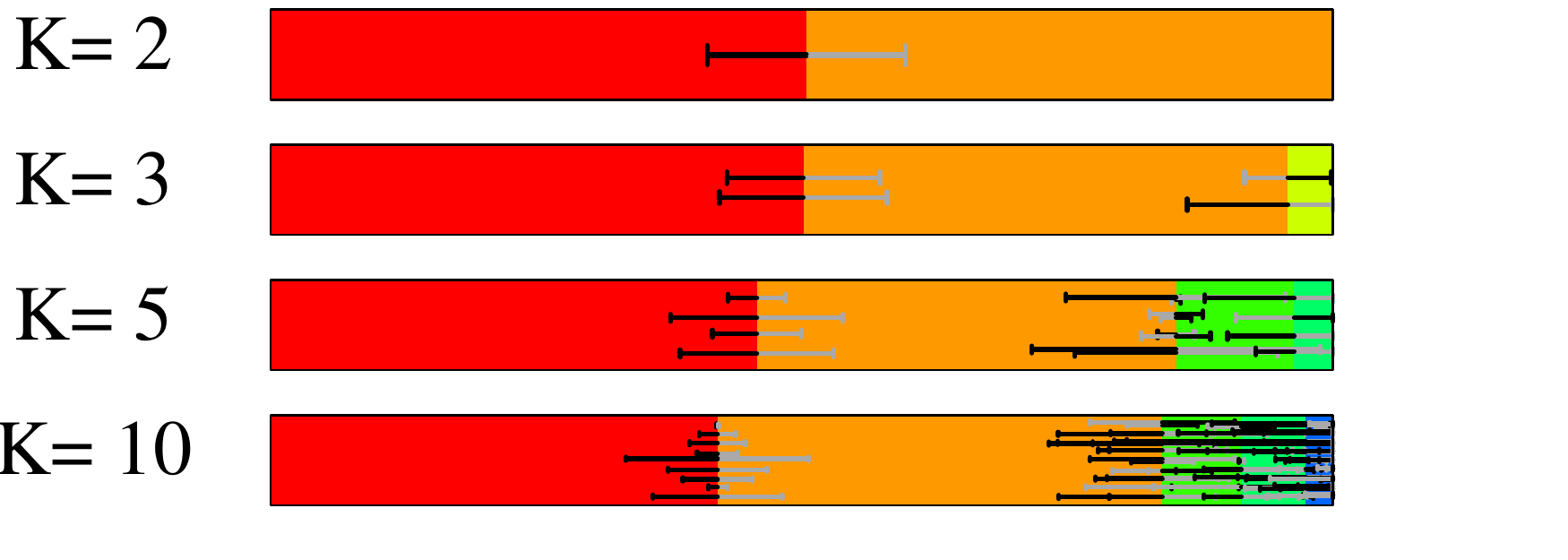}
\end{center}
\caption{For the same test sample {\color{black} as in Figure~\ref{fig7}}, we provide estimates of $q$ and their variances when the number of populations in the reference database increases. We use $M=500$ markers, distributed according to $\beta(5,5)$, and a reference database with 10 diploids per population.}
  \label{fig3}
\end{figure}

\begin{figure}
  \includegraphics[width=0.95\textwidth]{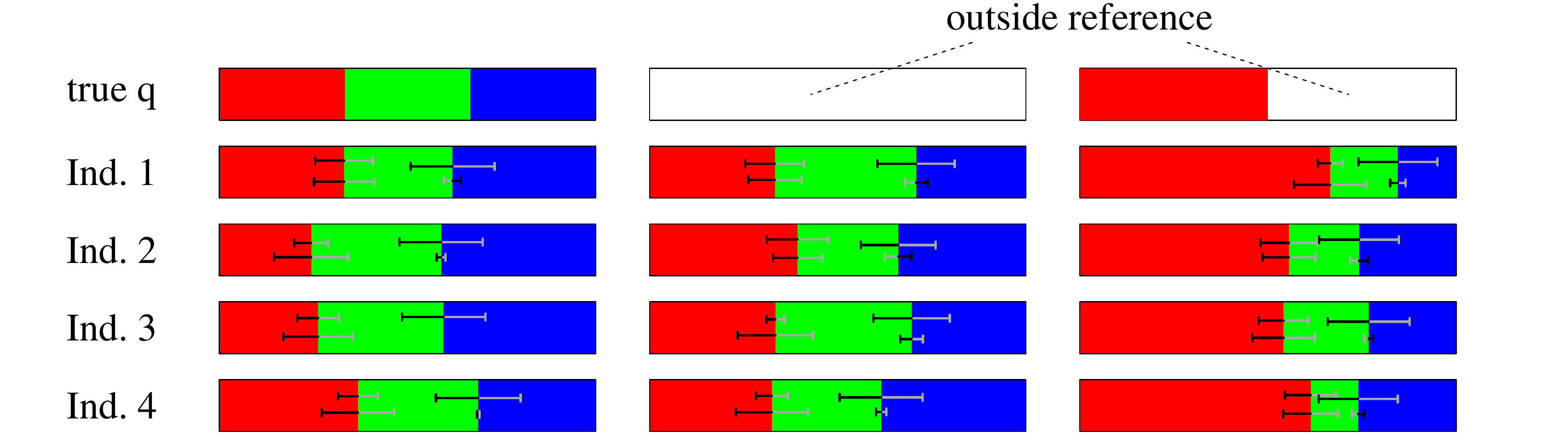}
  \caption{A comparison between estimates of $q$ if $q=c(1/3, 1/3, 1/3)$ (left), the sample is from a population outside the reference database (middle), and the sample is an admixture from one population within and outside the reference database. We use $M=500, N=300$. }
  \label{fig4}
\end{figure}

\subsection{Real-world data}
\label{S33}
As real-world examples, we consider a marker set used in forensic genetics, with an available reference (training) database available on the homepage of the software {\sc Snipper} \citep{Snipper2007}, which is an all-or-nothing classifier for biogeographical ancestry. It consists of 56 SNPs, which nearly fits with the set found in \cite{Kidd2014}. Interestingly, this set is implemented in the commercially available {\sc Verogen ForenSeq DNA Signature Kit}. On \url{http://mathgene.usc.es/snipper/}, a reference database for these SNPs can be used or downloaded. Here, the reference (training) database consists of 654 and 520 individuals, the majority coming from the 1000 genomes dataset  (abbreviated 1k, \citealp{Auton2015}), which in total covers more than 3000 individuals, and the HGDP-CEPH line \citep{HGDP2002} (abbreviated HGDP). To be more specific, we use the {\sc Forenseq 56 grid} available from the above webpage (accessed February 1, 2022). Here, the sheet {\em Snipper Reference Grid} comes with 654 samples from seven continental groups: 108 Africans (AFR) from Yoruba (YRI, 1k); 79 (native) Americans (AMR) comprised of 20 Brazils (Surui, HGDP), 7 Colombians from Colombia (HGDP), 18 Peruvians (PEL, 1k) and 34 Maya from Mexico (HGDP); 103 East-Asian (EAS) Han Chinese from Beijing (1k); 99 Europeans (EUR), i.e.\ Utah residents with Western and Northern European ancestry (CEU, 1k); 134 from the Middle East (MEA; {\color{black}42 Druze, 46 Palestinian, 46 Bedouin,} HGDP); 28 Oceanians (OCE) from Papua New Guinea and 103 South-Asian (SAS) Gujarati Indians in Houston (1k). For the test data, we use the {\em SGDP Test Samples} from the same file, which were collected within the Simons Genome Diversity Project \citep{SGDP}. This data comes with population labels as well, but this information does not enter the analysis. 

Results on some test samples are presented in Figure~\ref{fig5}. We use the same representation of the results as in Figure~\ref{fig1}, based on Remark~\ref{rem:MNinfty}, i.e.\ we take into account both, finite $M$ (size of the marker set) and $N$ (size of the reference database). In Figure~\ref{fig5}(A)--(D), we see samples estimated to have ancestry in two different populations. It is known that European, Middle-East and South-East Asian ancestry are harder to distinguish than other other pairs \citep{pmid34545122}. Consequently, these distinctions have larger error bars in \ref{fig5}(A)--(B) than the African sample in \ref{fig5}(C) or the South-East-Asian sample from \ref{fig5}(D). In \ref{fig5}(E), a Siberian sample is studied. The nearest population in the reference database is either Admixed Americans or South-East Asians, which are also estimated for the sample. However, both come with a large uncertainty estimate, such that the ancestry of this sample is hard to pin down using the reference database at hand. Last, \ref{fig5}(F) shows a South-Asian Sample, and it is estimated to have ancestry from several other populations. However, error bars cover the whole non-South-East-Asian part, so these fractions may be misleading; compare with Figure~\ref{fig3}. In order to analyse, which pairs of populations the AIMset can distinguish with low uncertainty, we analysed the covariance matrix from \eqref{eq:6} for samples with equal fractions from two populations. For a sample from populations $k$ and $l$, we display $\mathbb{COV}[\hat q_k, \hat q_l]$ in Figure~\ref{fig6}. Again, we see that samples which are admixed from Middle East and Europe, as well as from Middle East and South-East Asia, will have the largest error bars. 

\begin{figure}
\includegraphics[width=0.95\textwidth]{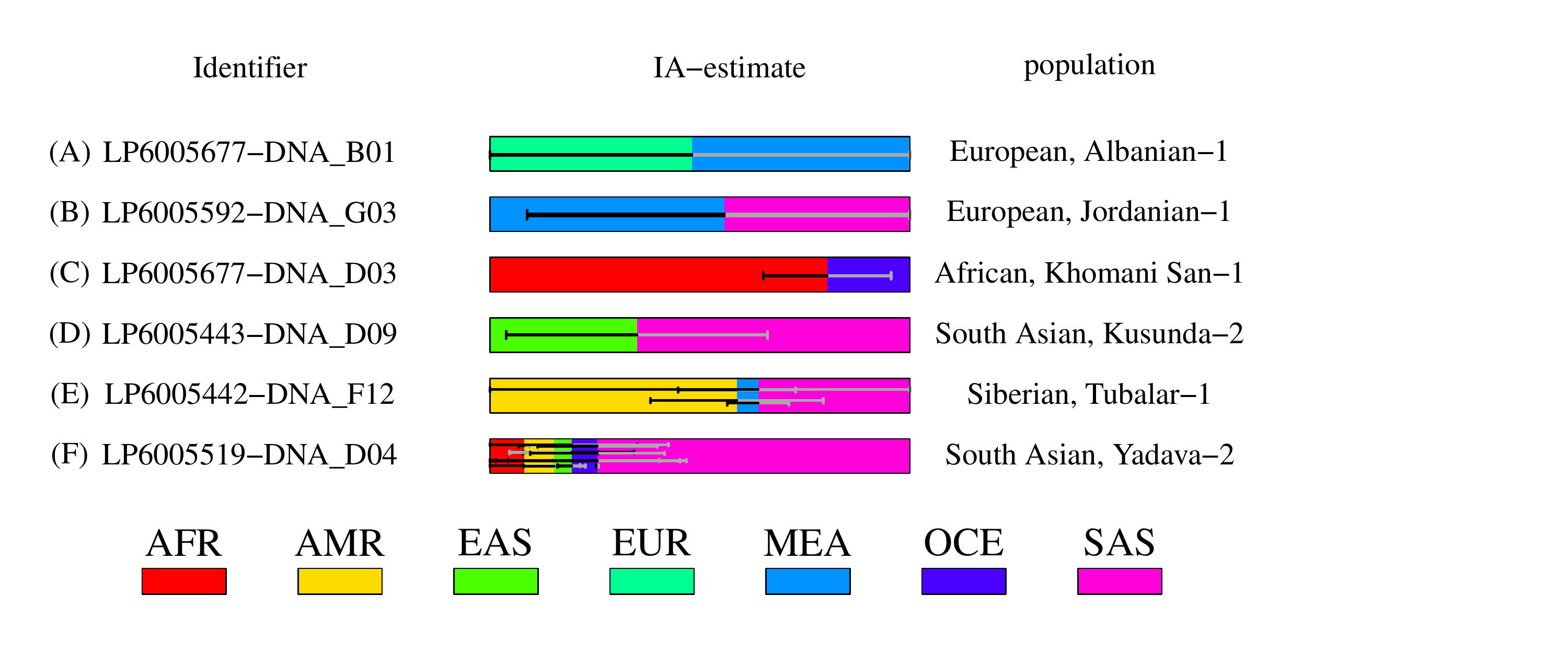}
\caption{For six samples of the Simons Genome Diversity Project \citep{SGDP}, we present the results for our analysis. In the reference database, there are seven population labels: AFR (African), AMR (Admixed American), EAS (East-Asian), EUR (European), MEA (Middle-East), OCE (Oceanian) and SAS (South-East-Asian); See the beginning of Section~\ref{S33} for more details.}
  \label{fig5}
\end{figure}

\begin{figure}
\begin{center}
\includegraphics[width=0.45\textwidth]{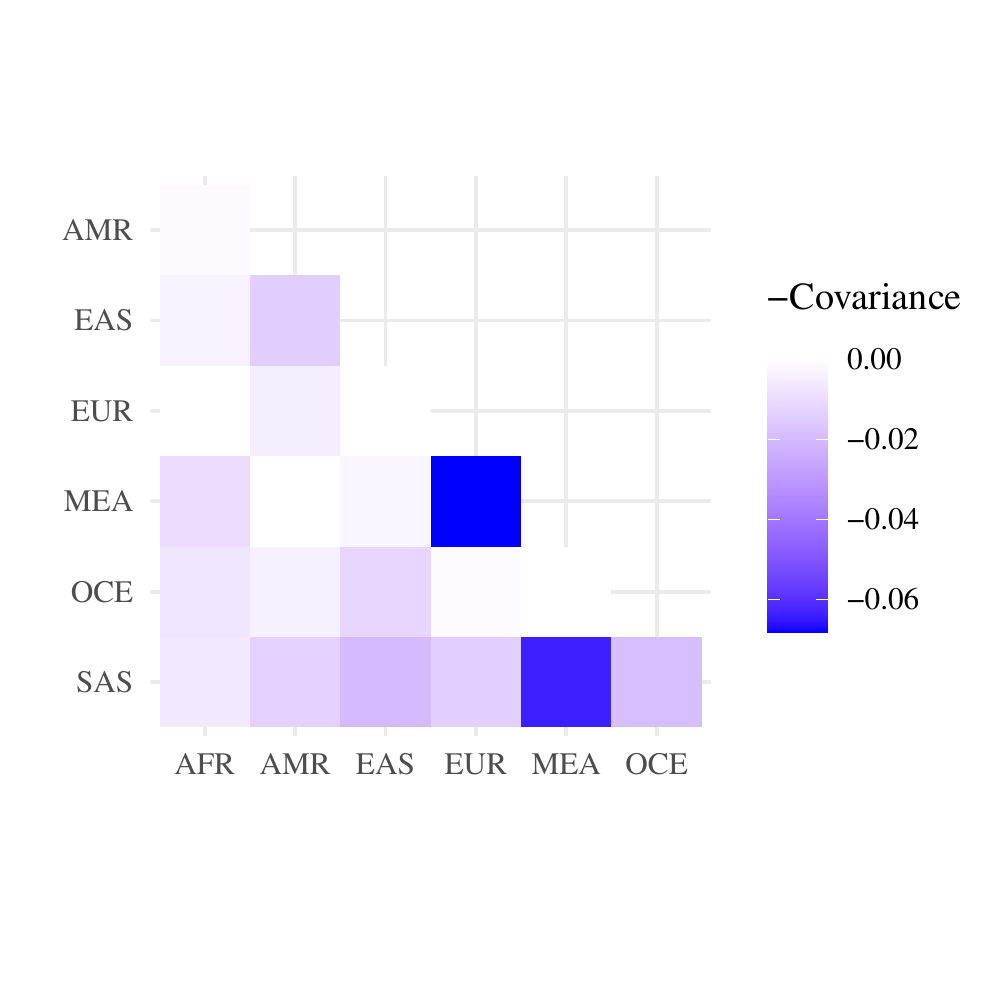}
\end{center}
\caption{When analysing a 50/50 admixed individual from populations $k, l$, the covariance of $\hat q_k$ and $\hat q_l$ is displayed.}
  \label{fig6}
\end{figure}

\subsection{Comparison to {\sc admixture}}\label{ss34}
We compared estimators for $q$, as obtained through Theorem~\ref{T2} to results obtained from {\sc admixture} (version 1.3.0). Note that the latter software has two modes which are comparable to the approach we take here. In the {\em supervised mode}, a subset of the data is allowed to have known (all-or-nothing-)ancestries. In this mode, we combine training and test data, and provide the ancestries of the training data. Here, the test sample is still involved in the estimation of allele frequencies $p$. In the {\em projection} mode, we can run {\sc admixture} with fixed allele frequencies. Here, we provide the genetic data from the test data, together with allele frequencies from the training data, which is the same case as treated here. In the comparison, we restrict ourselves to the 56 bi-allelic SNPs from \citep{Kidd2014}. Note that {\sc admixture} uses a block-relaxation algorithm and quadratic programming in order to maximize the log-likelihood $\ell$ from \eqref{eq:L|P}, while we use the iterative approach of Theorem~\ref{T2}, implemented using {\tt R}. The latter results in very compact code, but {\sc admixture} is much faster in estimating $q$. However, the number of markers in our analysis is limited, hence runtimes are no concern. For the point estimators, we find a deviation (using a total variation distance on $\mathbb S_K$, averaged over all test samples) between $\hat q$ as obtained from Theorem~\ref{T2} and to the supervised mode in {\sc admixture} of $3.1\%$, and to the projected mode of $0.026\%$. The smaller difference for the projected mode makes sense since this model follows the statistical model from Definition~\ref{def:admixtureModel}. When comparing the estimators of the variance of $\hat q$, {\sc admixture} gives bootstrap estimates for the bias and standard error of all $\hat q_k$. However, it is not described if both, individuals and markers are resampled, or only one of them. In addition, covariances between $\hat q_k$ and $\hat q_\ell$ are not reported for $k\neq \ell$. 



\section{Discussion}
{\color{black}Estimating individual admixture from individual genetic data is based on the admixture model, which can only be a coarse approximation of reality; see e.g.\ \cite{pmid30108219, pmid32323416}. Still, under ideal conditions, i.e.\ if the data follows the admixture model, the estimation of individual admixture comes with several statistical questions we are dealing with. Here, we present results on the effects of choice of markers, choice of reference populations, and finiteness of reference database on uncertainties in the estimation process.}


Barplots, used as graphical illustrations for point estimates of IA, frequently have to be taken with caution. For example,  \cite{pmid30108219} show that unsampled populations can lead to spurious results. Along these lines, we show (Figure~\ref{fig3}) that a large number of reference populations may lead to higher levels of admixture estimates. One way out is regularization and to penalize admixture per se; see \citep{pmid21682921}. Another method is to use a Bayesian approach and use an apriori distribution which is highly concetrated on non-admixed samples. The resulting posteriori distribution will then put more weight on less admixed samples as well. The method we choose here -- see e.g.\ Figure~\ref{fig1} -- is to put error bars based on theoretical results on top of a barplot, which gives an estimate of the variance of the estimator. One advantage is an easy to interpret overview of uncertaintly in the estimation of IA. As an example, recall that allele frequencies of Europeans and individuals from the Middle-East are similar \citep{pmid34545122}, which may lead to wrong assignment. Here, when using error bars -- see Figure~\ref{fig5} -- we see at least that point estimates for IA come with a high degree of uncertainty. 

Technically, our analysis of the approximative normality of the Maximum-Likelihood estimator (Theorem~\ref{T1}) is dealing with a simplified population model, where geographic origins of individuals from the reference database are known. The reason for this simplification is that we are treating the supervised case where a reference database with known ancestry is available, which is in contrast to the unsupervised case implemented in {\sc STRUCTURE} or {\sc ADMIXTURE}, where allele frequencies in all groups $p$ are estimated simultaneously with $q$ for all datasets. We are restricting ourselves to the simpler supervised case for two reasons: First, analytical results only seem to be available if $p$ is known, and second, in forensic applications, the test and training/reference data are two independently collected datasets.

In the analysis of the admixture model, we have to impose some minimal assumptions. The most severe is that $q_k>0$ for all $k$, i.e.\ all ancestral populations have contributed to the sample. (This is $q^\ast \in \mathbb S^\circ$ in Lemma~\ref{l:1} and $\hat q\in \mathbb S^\circ$ in Theorem~\ref{T1}.) Although this is not the case in many examples, our results from Theorem~\ref{T1} can still be used on the subset of populations which contribute to the test sample. On a qualitative level, we show (see Figure~\ref{fig7}) that it is more efficient {\color{black} to increase} the number of markers -- provided they are able to separate populations -- than the size of the reference database in order to reduce uncertainty in IA estimation. However, what our results also show is that it is not always safe to take small error bars as a proof of a good model fit -- see Figure~\ref{fig4}. In order to obtain an idea how well the model fits the data, \cite{pmid32323416} study correlations of the deviation from the genetic data to their expectation (depending on IA) along loci. Another way is to implement an option saying that the reference database is inconclusive about the test sample; see \cite{Tvedebrink2018}. Another would be to study the distribution of $\hat q$ under misspecification of the model, where the test data is generated under a model involving more populations than used for obtaining the IA. Studying all these sources of randomness for IA inference will eventually lead to more robust results.

\subsubsection*{Acknowledgements}
We thank two anonymous referees and the editor for a very close look at our manuscript, which led to many improvements. {\color{black} PP and AR thank the Freiburg Center for Data analysis and Modeling for partial funding.}

\end{document}